\newcommand{\rr}{{\mathbb{R}}}
\newcommand{\T}{{\text{T}}}
\newcommand{\col}{{\text{col}}}
\newcommand{\wang}[1]{\ifthenelse{\boolean{showcomments}}
	{ \textcolor{red}{(ZW:  #1)}}{}}
\newcommand{\fliu}[1]{\ifthenelse{\boolean{showcomments}}
	{ \textcolor{red}{(FL:  #1)}}{}}
\newcommand{\peng}[1]{\ifthenelse{\boolean{showcomments}}
	{ \textcolor{red}{(PY:  #1)}}{}}
\theoremstyle{definition}
\newtheorem{theorem}{Theorem}
\newtheorem{proposition}[theorem]{Proposition}
\theoremstyle{definition}
\newtheorem{example}{Example}
\begin{document}
	\setstretch{0.96}	
	
	\title{Compositional and Equilibrium-Free Conditions for Power System Stability--Part II: Method and Application}
	
	\author{Peng~Yang, \IEEEmembership{Member,~IEEE}, 
		~Yifan~Su,
		~Xiaoyu~Peng,
		~Hua~Geng, \IEEEmembership{Fellow,~IEEE},
		~Feng~Liu, \IEEEmembership{Senior~Member,~IEEE}
		
	}
	
	\maketitle
	
	\begin{abstract}
		This two-part paper proposes a compositional and equilibrium-free approach to analyzing power system stability. In Part I, we have established the stability theory and proposed stability conditions based on the delta dissipativity. In Part II, we focus on methods for applying our theory to complex power grids. We first propose a method to verify the local condition, i.e., delta dissipativity, for heterogeneous devices in power systems. Then, we propose a method to verify the coupling condition based on Alternating Direction Method of Multipliers (ADMM). Finally, we investigate three applications of our theory including stability assessment toward multiple equilibria, stability assessment under varying operating conditions, and a distributed computing framework. Case studies on modified IEEE 9-bus, 39-bus, and 118-bus benchmarks well verified our theory and methods.
	\end{abstract}
	
	\begin{IEEEkeywords}
		Power system stability; dissipativity certification; ADMM; distributed algorithm.
	\end{IEEEkeywords}

	\IEEEpeerreviewmaketitle

	\section{Introduction}
	\label{sec:1}
	\IEEEPARstart{S}{tability} analysis of power systems has become increasingly challenging with the growing integration of renewable energy, the proliferation of heterogeneous devices, and the complexity of modern grids. In Part I of this two-part paper, we developed a compositional and equilibrium-free stability theory based on delta dissipativity. This provides a theoretical foundation for overcoming the limitations of traditional centralized and equilibrium-oriented methods \cite{Stott_Powersystemdynamic_1979,Sastry_Hierarchicalstabilityalert_1980,Chiang_Directstabilityanalysis_1995}, including poor scalability, inadequate privacy protection, and high computational demands. As Part I focuses on the theoretical basis, Part II is dedicated to methods and practical applications of our theory in complex power grids. Specifically, we tackle two key challenges: verifying the local delta dissipativity property for heterogeneous power devices and verifying the coupling conditions in large systems. 
	
	Verifying subsystem-level properties, such as dissipativity and passivity, is essential for compositional stability analysis. For linear time-invariant (LTI) systems, the classical passivity can be readily verified using positive realness conditions \cite{kottenstette2014relationships} or the Kalman–Yakubovich–Popov (KYP) lemma \cite{gusev2006kalman}. However, for nonlinear systems, verifying dissipativity is considerably more challenging, with no general method available. Most existing approaches in power systems rely on constructing storage functions based on physical intuition \cite{Fiaz_portHamiltonianapproachpower_2013a, Stegink_unifyingenergybasedapproach_2016}. The only systematic techniques, to the best of our knowledge, are the Hill–Moylan Lemma for classical dissipativity \cite{hill1976stability} and its variation for equilibrium-independent dissipativity \cite{simpson2018equilibrium}. However, these methods are limited to control-affine nonlinear systems. Adapting these frameworks to heterogeneous nonlinear devices in power systems remains a challenge. 
	
	Verifying interconnection properties, such as the proposed coupling condition, is another critical aspect of compositional stability analysis\cite{Song_DistributedFrameworkStability_2017,8890862}. Since stability is an inherent property of the entire system, verifying certain interconnection properties is essential to avoid overly conservative results. However, efficiently verifying interconnection properties in a scalable manner is crucial, particularly for large-scale power grids. Distributed algorithms, such as those based on the Alternating Direction Method of Multipliers (ADMM) \cite{boyd2011distributed}, have been widely applied in power system optimization problems (see \cite{molzahn2017survey} and the references therein), but their application to stability analysis is relatively underexplored. Fortunately, stability and optimization are often inherently connected, as stability verification problems may be reformulated into optimization counterparts. Building on this idea, distributed algorithms using dual decomposition \cite{topcu2009compositional} and ADMM \cite{meissen2015compositional} have been proposed to certify interconnection properties for classical dissipativity. 
	
	In this paper, we first propose a method to verify the local delta dissipativity condition based on Krasovskii's type storage function, which applies to a wide variety of nonlinear devices. This ensures that the local stability conditions developed in Part I are applicable to the heterogeneous components of modern power systems. Next, we extend the approach in \cite{meissen2015compositional} and propose a distributed algorithm based on ADMM to verify the coupling condition developed in Part I. This contributes to reducing computational burdens and addressing privacy concerns in large-scale systems.
	Finally, we demonstrate three key applications of our theory and methods, including stability assessment toward multiple equilibria, assessment under varying operating conditions, and a distributed stability assessment framework for large-scale systems. 
	The main contributions of Part II are summarized below.
	\begin{itemize}
		\item We propose a method for verifying local delta dissipativity, which utilizes Krasovskii's type storage function for the delta dissipation inequality. We also show how to transform several widely used device models into the required input-output formulation and characterize delta dissipativity analytically for simple static devices.
		\item We propose a distributed method for verifying the coupling condition based on the ADMM algorithm. In contrast to the standard ADMM approach \cite{meissen2015compositional}, our method introduces a relaxation variable and adaptive penalty to improve convergence. A \textit{p}-check process is also developed to address the bilinear terms in the coupling condition. This framework enables efficient and scalable evaluation of system-wide stability in large-scale interconnected systems.
		\item Built on the equilibrium-free and compositional features of our theory and methods, we demonstrate three intriguing applications in power system stability analysis. These applications shed new light on the stability analysis of massive heterogeneous devices with multiple equilibria in future power grids.
	\end{itemize}
	
	We validate the methods and applications using modified IEEE 9-bus, 39-bus, and 118-bus benchmarks. 
	
	The remainder of this paper is structured as follows. Section II introduces the method for verifying local delta dissipativity conditions for heterogeneous devices. Section III presents the ADMM-based algorithm for verifying the coupling condition. Section IV explores three key applications of the proposed framework and validates them through case studies on benchmark systems. Section V concludes the paper.

	\section{Verification of Local Conditions}
	This section presents a systematic method to verify the local delta dissipativity of individual power system devices. The approach involves two primary steps: (i) transforming device models into a standardized input-output form, and (ii) validating dissipativity using a Krasovskii-type storage function. The latter reduces to an algebraic inequality, enabling straightforward numerical or analytical verification. 
	\subsection{Model Transformation and Illustrative Examples}\label{sec:model}
	Many conventional device models do not inherently express inputs and outputs in the common direct-quadrature (DQ) voltage-current reference frame. To address this, we introduce a model transformation step to reconcile existing device representations with the proposed framework. Below, we demonstrate this process using widely adopted dynamic models for synchronous generators and inverter-based resources. These examples will also be used in later system-level studies.
	\subsubsection{Examples of Dynamic Subsystems in Power Systems}\label{sec:dynamic example}
	We first give several widely used dynamic device models in power systems stability analysis, including synchronous generators and inverter-interfaced devices.
	\begin{example}\label{eg:c5-sg}
		Consider the 3-order model of a synchronous generator (SG) \cite{1085625}
		\begin{equation}\label{eq:c5-sg1}
			\left\lbrace 
			\begin{aligned}
				\dot{\delta}_{i}&=\omega_i \\ 
				M_i\dot{\omega}_i&=-D_i\omega_i-P_{i}^{e}+P_{i}^{m} \\ 
				T'_{d0i}\dot{E}'_{qi}&=-E'_{qi}+I_{di}(x_{di}-x'_{di})+E_{fi}
			\end{aligned}\right.
		\end{equation}
		where $\delta_{i}$ is the rotor angle, $\omega_i$ is the frequency, $E'_{qi}$ is the $q$-axis transient voltage. ${{M}_{i}}>0$ and ${{D}_{i}}>0$ denotes the inertia and frequency damping, respectively. $P_{i}^{m}>0$ and $E_{fi}>0$ denote the constant mechanical power and the excitation voltage, respectively. $T'_{d0i}>0$ is the $d$-axis open-circuit transient time constant. $x_{di}$, $x'_{di}$, and $x_{qi}$ are the $d$-axis synchronous reactance, $d$-axis transient reactance, and $q$-axis synchronous reactance, respectively. $I_{di}$, ${{I}_{qi}}$, ${{V}_{di}}$, and ${{V}_{qi}}$ are the terminal currents and voltages in the local dq coordinate, which satisfy the following stator circuit equations (ignoring the winding resistance)
		\begin{equation}\label{eq:c5-sg2}
			\begin{bmatrix}
				V_{di}\\V_{qi}
			\end{bmatrix}=\begin{bmatrix}
				E'_{qi}\\0
			\end{bmatrix}-
			\begin{bmatrix}
				0&-x'_{di}\\x_{qi}&0
			\end{bmatrix}
			\begin{bmatrix}
				I_{di}\\I_{qi}
			\end{bmatrix},
		\end{equation}
		or equivalently,
		\begin{equation}\label{eq:c5-sg3}
			\begin{bmatrix}
				I_{di}\\I_{qi}
			\end{bmatrix}=
			\begin{bmatrix}
				0&-x'_{di}\\x_{qi}&0
			\end{bmatrix}^{-1}
			\left( \begin{bmatrix}
				V_{di}\\V_{qi}
			\end{bmatrix}-\begin{bmatrix}
				E'_{qi}\\0
			\end{bmatrix}\right) 
		\end{equation}
		The electrical active power of the generator is
		\begin{equation}\label{eq:c5-sg4}
			P_{i}^{e}=E'_{qi}I_{di}+(x'_{di}-x_{qi})I_{di}I_{qi}
		\end{equation}
		Define the transformation matrix
		\begin{equation*}
			T(\delta_i):=\begin{bmatrix}
				\cos\delta_i&\sin\delta_i\\-\sin\delta_i&\cos\delta_i
			\end{bmatrix}
		\end{equation*}
		Its inverse reads
		\begin{equation*}
			T(\delta_i)^{-1}=\begin{bmatrix}
				\cos\delta_i&-\sin\delta_i\\\sin\delta_i&\cos\delta_i
			\end{bmatrix}
		\end{equation*}
		Then, the DQ components have the following transformation relation with the local dq components
		\begin{equation}\label{eq:c5-sg5}
			\begin{bmatrix}
				I_{di}\\I_{qi}
			\end{bmatrix}=T(\delta_i)
			\begin{bmatrix}
				I_{Di}\\I_{Qi}
			\end{bmatrix},\qquad
			\begin{bmatrix}
				V_{Di}\\V_{Qi}
			\end{bmatrix}=T(\delta_i)^{-1}
			\begin{bmatrix}
				V_{di}\\V_{qi}
			\end{bmatrix},
		\end{equation}
		and
		\begin{equation}\label{eq:c5-sg6}
			\begin{bmatrix}
				I_{Di}\\I_{Qi}
			\end{bmatrix}=T(\delta_i)^{-1}
			\begin{bmatrix}
				I_{di}\\I_{qi}
			\end{bmatrix},\qquad
			\begin{bmatrix}
				V_{di}\\V_{qi}
			\end{bmatrix}=T(\delta_i)
			\begin{bmatrix}
				V_{Di}\\V_{Qi}
			\end{bmatrix}.
		\end{equation}
		
		Thus, equations \eqref{eq:c5-sg1}\eqref{eq:c5-sg2}\eqref{eq:c5-sg4}\eqref{eq:c5-sg5} constitute the 3-order synchronous generator input-state-output model with the input $u_i=\col(-I_{Di},-I_{Qi})$, the state $x_i=\col(\delta_{i} ,\omega_i,E'_{qi})$, and the output $y_i=\col(V_{Di},V_{Qi})$. Or, equivalently, eliminating $I_{di}$ and $I_{qi}$ through \eqref{eq:c5-sg3}, equations \eqref{eq:c5-sg1}\eqref{eq:c5-sg3}\eqref{eq:c5-sg4}\eqref{eq:c5-sg6} constitute a model with the input $u_i=\col(V_{Di},V_{Qi})$ and output $y_i=\col(-I_{Di},-I_{Qi})$.
	\end{example}
	
	\begin{example}\label{eg:c5-pll}
		Consider a phase-locked-loop (PLL) synchronized inverter. Ignoring the fast dynamics of current and voltage loop, the PLL dynamics together with the droop power control loop read\cite{huang2022impacts}
		\begin{equation}\label{eq:pll1}
			\left\lbrace
			\begin{aligned}
				\dot{\xi}&=V_i\sin(\theta_i-\delta_{pll})\\
				\dot{\delta}_{pll}&=\omega_{pll}=K_PV_i\sin(\theta_i-\delta_{pll})+K_I\xi\\
				\tau_1\dot{P}_i&=-P_i+P^{ref}-d_1\omega_{pll}\\
				\tau_2\dot{Q}_i&=-Q_i+Q^{ref}-d_2(V_i-V^{ref}),
			\end{aligned}\right.
		\end{equation}
		where $V_i$ and $\theta_i$ are the magnitude and phase angle of the bus voltage in the DQ coordinate system; $\delta_{pll}$ and $\omega_{pll}$ are the phase angle and frequency obtained from the PLL tracking; $V_i\sin(\theta_i-\delta_{pll})$ is the $d$-axis component of the bus voltage in the local dq coordinate system; $K_P$ and $K_I$ are the gains of the PI control of the PLL; $P_i$ and $Q_i$ are the active and reactive power outputs; $\tau_1$ and $\tau_2$ are the power loop time constants; $d_1$ and $d_2$ are the droop coefficients ($d_1=d_2=0$ for non-droop dynamics); $P^{ref}$, $Q^{ref}$, and $V^{ref}$ are the reference active power, reactive power, and bus voltage, respectively.
		
		Taking $u_i=\col(V_{Di},V_{Qi})$ as the input, we have
		\begin{equation}\label{eq:pll2}
			V_i=\sqrt{V_{Di}^2+V_{Qi}^2},\qquad \theta_i=\arctan(V_{Qi}/V_{Di}).
		\end{equation}
		The output current and power are linked by
		\begin{equation}\label{eq:c5-vsg3}
			\left\lbrace
			\begin{aligned}
				P_i&=I_{Di}V_{Di}+I_{Qi}V_{Qi}\\
				Q_i&=I_{Di}V_{Qi}-I_{Qi}V_{Di}.
			\end{aligned}\right.
		\end{equation}
		This leads to
		\begin{equation}\label{eq:pll3}
			\left\lbrace
			\begin{aligned}
				I_{Di}&=\frac{P_iV_{Di}+Q_iV_{Qi}}{V_{Di}^2+V_{Qi}^2}\\
				I_{Qi}&=\frac{P_iV_{Qi}-Q_iV_{Di}}{V_{Di}^2+V_{Qi}^2}.
			\end{aligned}\right.
		\end{equation}
		
		Thereby, equations \eqref{eq:pll1}-\eqref{eq:pll3} constitute a dynamic subsystem with the input $u_i=\col(V_{Di},V_{Qi})$, the state $x_i=\col(\xi,\delta_{pll},P_i,Q_i)$, and the output $y_i= \col(-I_{Di},-I_{Qi})$.
	\end{example}
	
	\begin{example}\label{eg:c5-vsg}
		Consider a grid-forming inverter with virtual synchronous generator (VSG) control, which is commonly modeled as \cite{bevrani2014virtual}
		\begin{equation}\label{eq:c5-vsg1}
			\left\lbrace
			\begin{aligned}
				\dot{\theta}_i&=\omega_i\\
				M_i\dot{\omega}_i&=-D_i\omega_i-P_i+P^{ref}-K_I\theta_i\\
				T_i\dot{V}_i&=-(V_i-V^{ref})+K_Q(Q^{ref}-Q_i)/V_i,
			\end{aligned}\right.
		\end{equation}
		where $V_i$ and $\theta_i$ are the magnitude and phase angle of the bus voltage in the DQ coordinate system; $P_i$ and $Q_i$ are the active and reactive power outputs; $P^{ref}$ and $Q^{ref}$ are the reference values of the output powers; $M_i$ and $D_i$ are the virtual inertia and the frequency damping; $K_I\geq0$ is the frequency integral control coefficient; $T_i$ is the voltage control time constant; $K_Q$ is the reactive power control coefficient.
		
		Taking $y_i=\col(V_{Di},V_{Qi})$ as the output, we have
		\begin{equation}\label{eq:c5-vsg2}
			V_{Di}=V_i\cos\theta_i,\qquad V_{Qi}=V_i\sin\theta_i
		\end{equation}
		
		Thus, eliminating $P_i$ and $Q_i$ in \eqref{eq:c5-vsg1} by \eqref{eq:c5-vsg3}, we obtain a dynamic subsystem with the input $u_i=\col(-I_{Di},-I_{Qi})$, the state $x_i=\col(\theta_i,V_i)$ and the output $y_i=\col (V_{Di},V_{Qi})$.
	\end{example}
	
	\begin{example}\label{eg:c5-cd}
		The grid-forming inverter with conventional droop control (CD) is modeled as \cite{Zhang_OnlineDynamicSecurity_2015}
		\begin{equation}\label{eq:c5-cd1}
			\left\lbrace
			\begin{aligned}
				\tau_1\dot{\theta}_i&=-(\theta_i-\theta^{ref})-d_1(P_i-P^{ref})\\
				\tau_2\dot{V}_i&=-(V_i-V^{ref})-d_2(Q_i-Q^{ref}),
			\end{aligned}\right.
		\end{equation}
		where $V_i$ and $\theta_i$ are the magnitude and phase angle of the bus voltage in the DQ coordinate system; $P_i$ and $Q_i$ are the active and reactive power outputs; $P^{ref}$, $Q^{ref}$, $\theta^{ref}$, and $V^{ref}$ are the corresponding reference values; $\tau_1$ and $\tau_2$ are the control time constants; and $d_1$ and $d_2$ are the droop coefficients.
		
		Similarly, with equations \eqref{eq:c5-cd1} \eqref{eq:c5-vsg2} and \eqref{eq:c5-vsg3}, we have a dynamic subsystem with the input $u_i=\col(-I_{Di},-I_{Qi})$, the state $x_i=\col(\theta_i,V_i)$, and the output $y_i=\col (V_{Di},V_{Qi})$.
	\end{example}
	
	\begin{example}\label{eg:c5-qd}
		Another typical inverter-based dynamics is known as the quadratic droop control (QD) in the literature \cite{Simpson-Porco_VoltageStabilizationMicrogrids_2017}. Its dynamics read.
		\begin{equation}\label{eq:egQD}
			\left\lbrace 
			\begin{aligned}
				\tau_{1}\dot{\theta}_i&=-(\theta_i-\theta^{ref})-d_{i1}(P_i-P^{ref})\\
				\tau_{2}\dot{V}_i&=-d_{2}Q_i-V_i(V_i-u^{ref}),
			\end{aligned}\right. 
		\end{equation}
		where $u^{ref}$ is a constant to regulate the steady state that satisfies
		\begin{equation}\label{eq:QDu}
			0=-d_{2}Q^{ref}-V^{ref}(V^{ref}-u^{ref}).
		\end{equation}
		Its voltage dynamics is the droop relation between the reactive power and the quadratic voltage, which gives rise to its name. The meaning of its parameters and the transformation into DQ coordination is identical to the conventional droop in the previous example.
	\end{example}
	
	\subsubsection{Examples of Static Subsystems in Power Systems}
	This subsection introduces static subsystem models for power system stability analysis. In structure-preserving power system frameworks, many buses (e.g., intermediate network nodes and ZIP loads) are static, meaning their behavior can be represented solely by algebraic relationships between bus voltages and injected currents.
	Several common static components in power systems are examined as follows.
	\begin{example}[Intermediate Node]\label{eg:c5-lianjie}
		An intermediate node in the network is not connected to any device. Its injected current is constantly equal to zero. Therefore, it can be considered as a static subsystem with ${u_i}=({V}_{Di},{{V}_{Qi}})^\text{T}$ as input, ${{y}_{i}}=-{{({{I}_{Di}},{{I}_{Qi}})}^\text{T}}$ as output, and the output function $h_i\equiv(0,0)^\text{T}$.
	\end{example}
	
	\begin{example}[Constant Voltage Source]\label{eg:c5-constV}
		In the stability analysis of microgrids and distribution networks, the bus connected to the transmission grid is generally modeled as a constant voltage source, i.e., ${V}_{Di}\equiv {V}_{Di}^0,{V}_{Qi}\equiv {V}_{Qi}^0$. Such nodes can be considered as a static subsystem with input ${{u}_{i}}=-{{({{I}_{Di}},{{I}_{Qi}})}^\text{T}}$, output ${{y}_{i}}={{({{V}_{Di}},{{V}_{Qi}})}^\text{T}}$, and the output function $h_i\equiv({V}_{Di}^0,{V}_{Qi}^0)^\text{T}$.
	\end{example}
	
	\begin{example}[ZIP Load]\label{eg:c5-zip}
		The ZIP load model is commonly used in power system stability analysis \cite{Kundur_Powersystemstability_1994}. Constant impedance, constant current, and constant power load models can all be considered as special cases of the ZIP load model. Let the bus voltage amplitude be $V_i$, and the active and reactive power loads be $P_i^L$ and $Q_i^L$, respectively. Then the basic expression of the ZIP load model is
		\begin{equation}\label{eq:c5-zip1}
			\left\lbrace\begin{aligned}
				P_i^L=&Z_p^lV_i^2+I_p^lV_i+P_0^l\\
				Q_i^L=&Z_q^lV_i^2+I_q^lV_i+Q_0^l,
			\end{aligned}\right.
		\end{equation}
		where $Z_p^l$, $Z_q^l$, $I_p^l$, $I_q^l$, $P_0^l$, $Q_0^l$ are non-negative constants, which denote the three components of the ZIP load: the constant impedance part, $Z_p^l,Z_q^l$; the constant current part, $I_p^l,I_q^l$; and the constant power part, $P_0^l,Q_0^l$, respectively. 		
		Note that
		\begin{equation*}
			\left\lbrace\begin{aligned}
				-P_i^L=&I_{Di}V_{Di}+I_{Qi}V_{Qi}\\
				-Q_i^L=&I_{Di}V_{Qi}-I_{Qi}V_{Di},
			\end{aligned}\right.
		\end{equation*}
		which yields
		\begin{equation}\label{eq:c5-zip2}
			\left\lbrace
			\begin{aligned}
				I_{Di}&=\frac{-P_i^LV_{Di}-Q_i^LV_{Qi}}{V_{Di}^2+V_{Qi}^2}\\
				I_{Qi}&=\frac{-P_i^LV_{Qi}+Q_i^LV_{Di}}{V_{Di}^2+V_{Qi}^2}\\
				V_i&=\sqrt{V_{Qi}^2+V_{Di}^2}.
			\end{aligned}\right.
		\end{equation}
		Thus, the ZIP load node can be considered as a static subsystem with input ${{u}_{i}}={{({{V}_{Di}},{{V}_{Qi}})}^\text{T}}$, output ${{y}_{i}}=-{{({{I}_{Di}},{{I}_{Qi}})}^\text{T}}$, and $h_i$ is determined by the equation \eqref{eq:c5-zip1}\eqref{eq:c5-zip2}.
	\end{example}
	
	\subsection{Verifying Delta-Dissipativity}
	After transforming, we obtain a model with the required input and output variables. Now consider a general nonlinear dynamic subsystem as follows, where the subscript $i$ is omitted for simplicity.
	\begin{equation}\label{eq:dbus-noi}
		\left\lbrace 	
		\begin{aligned}
			\dot{x}&=f(x,u)\\
			y&=h(x,u).
		\end{aligned}\right. 
	\end{equation}
	The essence of verifying delta dissipativity is to find a proper storage function that satisfies the dissipative inequality. To account for the diversity of devices in power systems and enhance the universality of the proposed method, we adopt a Krasovskii's type storage function, i.e., $S(x,u)=f(x,u)^\T\mathcal{P}f(x,u)$ for some positive definite matrix $\mathcal{P}$. The following Proposition provides a sufficient condition to verify the delta dissipativity with such a storage function.
	\begin{proposition}\label{pro:c5-lmi}
		Consider the system \eqref{eq:dbus-noi}, if there exist a positive definite matrix $\mathcal{P}=\mathcal{P}^\text{T}\succ0$ and a positive real number $\epsilon>0$ such that for any $(x,u)\in\mathcal{D}$,
		\begin{equation}\label{eq:c5-lmi}
			\begin{aligned}
				\begin{bmatrix}
					\mathcal{P}\frac{\partial f}{\partial x}+\frac{\partial f^\text{T}}{\partial x}\mathcal{P}+\epsilon I & \mathcal{P}\frac{\partial f}{\partial u}\\
					\frac{\partial f^\text{T}}{\partial u}\mathcal{P}&0
				\end{bmatrix}-
				\begin{bmatrix}
					0&I\\
					\frac{\partial h}{\partial x}&\frac{\partial h}{\partial u}
				\end{bmatrix}^\text{T}
				X
				\begin{bmatrix}
					0&I\\
					\frac{\partial h}{\partial x}&\frac{\partial h}{\partial u}
				\end{bmatrix}\preceq0
			\end{aligned}
		\end{equation}
		then the system satisfies $delta$-D($X,\mathcal{D}$).
	\end{proposition}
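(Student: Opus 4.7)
The plan is to show that the LMI in \eqref{eq:c5-lmi} is nothing more than the pointwise quadratic reformulation of the delta-dissipation inequality when $S$ is the Krasovskii storage $S(x,u)=f(x,u)^\T \mathcal{P}\,f(x,u)$, so the argument is essentially a chain-rule computation followed by a matrix-form rewriting. First I would differentiate $S$ along a trajectory of \eqref{eq:dbus-noi}. Using $\dot{x}=f$, $\dot{f}=\frac{\partial f}{\partial x}f+\frac{\partial f}{\partial u}\dot{u}$, and the symmetry of $\mathcal{P}$, this gives
\begin{equation*}
\dot{S}=f^\T\!\left(\mathcal{P}\tfrac{\partial f}{\partial x}+\tfrac{\partial f^\T}{\partial x}\mathcal{P}\right)f+2f^\T \mathcal{P}\tfrac{\partial f}{\partial u}\dot{u}.
\end{equation*}
A parallel application of the chain rule to $y=h(x,u)$ yields $\dot{y}=\frac{\partial h}{\partial x}f+\frac{\partial h}{\partial u}\dot{u}$, so that $\col(\dot{u},\dot{y})$ is a linear function of $\col(f,\dot{u})$ whose coefficient matrix is exactly the right-hand block appearing in \eqref{eq:c5-lmi}.

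Next I would assemble the delta-dissipation inequality in its Krasovskii form. Writing the supply as $s(\dot{u},\dot{y})=\col(\dot{u},\dot{y})^\T X\,\col(\dot{u},\dot{y})$, the target inequality $\dot{S}+\epsilon\|f\|^2\leq s(\dot{u},\dot{y})$ becomes a scalar quadratic form in $\col(f,\dot{u})$ whose defining matrix is precisely the left-hand side of \eqref{eq:c5-lmi}: the upper-left block absorbs $\dot{S}+\epsilon\|f\|^2$, the off-diagonal blocks absorb the cross term $2f^\T\mathcal{P}\frac{\partial f}{\partial u}\dot{u}$, and the subtracted term carries the supply through the linear map above. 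Because an admissible input $u(\cdot)$ can be chosen with essentially arbitrary instantaneous derivative, $\dot{u}$ acts as a free vector at each $(x,u)\in\mathcal{D}$. Hence the quadratic inequality holds along every trajectory in $\mathcal{D}$ if and only if the matrix in \eqref{eq:c5-lmi} is negative semi-definite at every $(x,u)\in\mathcal{D}$, which is the hypothesis. Integrating this inequality in time and invoking the definition of delta-D$(X,\mathcal{D})$ from Part~I then delivers the conclusion.

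The only delicate point, which I would flag explicitly, is the treatment of $\dot{u}$ as an unconstrained free variable at each point $(x,u)$. This is standard in incremental and Krasovskii-type dissipativity settings, but it does require that the admissible input class be rich enough (e.g., piecewise $C^1$ with arbitrary one-sided derivatives) so that every direction of $\dot{u}$ is realized; otherwise one obtains only a restricted inequality. Beyond this observation, every remaining step, namely the chain-rule differentiation, the rearrangement of cross terms, and the equivalence between a pointwise quadratic inequality on $\col(f,\dot{u})\in\mathbb{R}^{n+m}$ and negative semidefiniteness of its Gram matrix, is routine linear algebra.
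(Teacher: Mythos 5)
Your proposal is correct and follows essentially the same route as the paper's proof: differentiate the Krasovskii storage $S=f^\T\mathcal{P}f$ along trajectories, recognize $\col(\dot u,\dot y)$ as the image of $\col(\dot x,\dot u)$ under the block matrix appearing in \eqref{eq:c5-lmi}, and read off the dissipation inequality from pointwise negative semidefiniteness (your ``only if'' discussion of $\dot u$ as a free direction is not needed for the sufficiency actually claimed, and no time integration is required since the definition is differential). The one step you omit is the verification that $S$ satisfies condition 1) of Definition 1 in Part I, which follows immediately from $\lambda_{\min}(\mathcal{P})\|f\|^2\le S(x,u)\le\lambda_{\max}(\mathcal{P})\|f\|^2$ with $\lambda_{\min}(\mathcal{P})>0$.
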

	\begin{proof}
		Consider the storage function $$S(x,u)=f(x,u)^\T \mathcal{P} f(x,u).$$ It follows from $\mathcal{P}\succ0$ that
		\begin{equation*}
			\lambda_{\min}(\mathcal{P})\|f(x,u)\|^2\leq S(x,u)\leq\lambda_{\max}(\mathcal{P})\|f(x,u)\|^2
		\end{equation*}
		where $\lambda_{\min}(\mathcal{P})>0$ and $\lambda_{\max}(\mathcal{P})>0$ represent the minimal and maximal eigenvalue of $\mathcal{P}$, respectively. Hence, $S(x,u)$ satisfies the condition 1) in \textbf{Definition 1} (see Part I).
		
		Calculating the time derivative of $S(x,u)$ yields
		\begin{equation*}
			\begin{aligned}
				\dot{S}(x,u)=&\frac{\partial S(x,u)}{\partial x}\dot{x}+\frac{\partial S(x,u)}{\partial u}\dot{u}\\
				=&
				\begin{bmatrix}
					\dot{x}\\ \dot{u}
				\end{bmatrix}^\T
				\begin{bmatrix}
					\mathcal{P}\frac{\partial f}{\partial x}+\frac{\partial f^\text{T}}{\partial x}\mathcal{P}& \mathcal{P}\frac{\partial f}{\partial u}\\
					\frac{\partial f^\text{T}}{\partial u}\mathcal{P}&0
				\end{bmatrix}
				\begin{bmatrix}
					\dot{x}\\ \dot{u}
				\end{bmatrix}
			\end{aligned}
		\end{equation*}
		From \eqref{eq:c5-lmi}, it holds that for any $(x,u)\in\mathcal{D}$ 
		\begin{equation}\label{eq:c5-dotS1}
			\begin{aligned}
				\dot{S}(x,u)\leq
				\begin{bmatrix}
					\dot{x}\\ \dot{u}
				\end{bmatrix}^\T
				\begin{bmatrix}
					0&I\\
					\frac{\partial h}{\partial x}&\frac{\partial h}{\partial u}
				\end{bmatrix}^\text{T}
				X
				\begin{bmatrix}
					0&I\\
					\frac{\partial h}{\partial x}&\frac{\partial h}{\partial u}
				\end{bmatrix}
				\begin{bmatrix}
					\dot{x}\\ \dot{u}
				\end{bmatrix}
				-\epsilon\|f(x,u)\|^2
			\end{aligned}
		\end{equation}
		Notice that
		\begin{equation*}
			\begin{bmatrix}
				0&I\\
				\frac{\partial h}{\partial x}&\frac{\partial h}{\partial u}
			\end{bmatrix}
			\begin{bmatrix}
				\dot{x}\\ \dot{u}
			\end{bmatrix}=\begin{bmatrix}
				\dot{u}\\ \dot{y}
			\end{bmatrix}
		\end{equation*}
		Hence, it follows from \eqref{eq:c5-dotS1} that
		\begin{equation*}
			\dot{S}(x,u)\leq
			\begin{bmatrix}
				\dot{u}\\ \dot{y}
			\end{bmatrix}^\T
			X
			\begin{bmatrix}
				\dot{u}\\ \dot{y}
			\end{bmatrix}
			-\epsilon\|f(x,u)\|^2
		\end{equation*}
		This shows for any $(x,u)\in\mathcal{D}$, the storage function $S(x,u)$ satisfies the condition 2) in \textbf{Definition 1} (see Part I). Hence, the system satisfies $delta$-D($X,\mathcal{D}$).
	\end{proof}
	
	For the linear input-state-output system
	\begin{equation}\label{eq:c5-linearbus}
		\left\lbrace 	
		\begin{aligned}
			\dot{x}&=Ax+Bu\\
			y&=Cx+Du
		\end{aligned}\right. 
	\end{equation}
	condition \eqref{eq:c5-lmi} degenerates into a linear matrix equality independent of $x$ and $u$.
	\begin{equation}\label{eq:c5-lmi-linear}
		\begin{aligned}
			\begin{bmatrix}
				\mathcal{P}A+A^\T \mathcal{P}+\epsilon I & \mathcal{P}B\\
				B^\T \mathcal{P}&0
			\end{bmatrix}-
			\begin{bmatrix}
				0&I\\
				C&D
			\end{bmatrix}^\text{T}
			X
			\begin{bmatrix}
				0&I\\
				C&D
			\end{bmatrix}\preceq0
		\end{aligned}	
	\end{equation}
	Hence, if there exists a positive definite matrix $\mathcal{P}=\mathcal{P}^\text{T}\succ0$ and a scalar $\epsilon>0$ such that \eqref{eq:c5-lmi-linear} holds, then it follows from Proposition \ref{pro:c5-lmi} that the linear system \eqref{eq:c5-linearbus} is globally delta dissipative.

	Now consider a static subsystem
	\begin{equation}\label{eq:sbus-noi}
		y=h(u),
	\end{equation}
	Given a matrix $X$, verifying the delta dissipativity of \eqref{eq:sbus-noi} can also be transformed into a matrix inequality.
	\begin{proposition}\label{pro:c5-lmi-static}
		Consider the system \eqref{eq:sbus-noi}, if for any $u\in\mathcal{D}$,
		\begin{equation}\label{eq:c5-Ps}
			\begin{bmatrix}
				I\\\frac{\partial h(u)}{\partial u}
			\end{bmatrix}^\text{T}
			X
			\begin{bmatrix}
				I\\\frac{\partial h(u)}{\partial u}
			\end{bmatrix}\succeq0,\;\;\forall u\in\mathcal{D},
		\end{equation}
		then the system satisfies $delta$-D($X,\mathcal{D}$).
	\end{proposition}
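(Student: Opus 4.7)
The plan is to handle the static case via the trivial choice of storage function. Since the subsystem \eqref{eq:sbus-noi} has no internal state, there is no dynamic energy to store, so I would take $S \equiv 0$, which obviously satisfies the positive-definite bound required by condition 1) of Definition 1 in Part I (possibly interpreted as a degenerate case with $\lambda_{\min} = \lambda_{\max} = 0$, or by appealing to a version of the definition specialized to static maps).

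With this choice, the delta dissipation inequality reduces to requiring that the supply rate $w(\dot u,\dot y) = \begin{bmatrix}\dot u\\\dot y\end{bmatrix}^{\T} X \begin{bmatrix}\dot u\\\dot y\end{bmatrix}$ be nonnegative along every admissible trajectory. Differentiating the output equation $y=h(u)$ gives $\dot y = \tfrac{\partial h(u)}{\partial u}\,\dot u$, so I can stack $\begin{bmatrix}\dot u\\\dot y\end{bmatrix} = \begin{bmatrix}I\\\tfrac{\partial h(u)}{\partial u}\end{bmatrix}\dot u$. Substituting this into the supply rate yields
\begin{equation*}
w(\dot u,\dot y) \;=\; \dot u^{\T}\begin{bmatrix}I\\\tfrac{\partial h(u)}{\partial u}\end{bmatrix}^{\T} X \begin{bmatrix}I\\\tfrac{\partial h(u)}{\partial u}\end{bmatrix}\dot u,
\end{equation*}
which is precisely the quadratic form appearing in \eqref{eq:c5-Ps}.

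Then the hypothesis \eqref{eq:c5-Ps} directly gives $w(\dot u,\dot y)\ge 0$ for every $u\in\mathcal{D}$ and every $\dot u$, so $\dot S = 0 \le w(\dot u,\dot y)$ on $\mathcal{D}$. Both conditions of Definition 1 (Part I) are therefore satisfied, and the system is delta-D$(X,\mathcal{D})$. The main (and essentially only) subtlety I expect is justifying that the Krasovskii-style storage function used in Proposition \ref{pro:c5-lmi} collapses to the zero function in the static setting without violating the lower-bound requirement in Definition 1; once the definition is read in the natural way for memoryless maps (no state, no $f$ to bound), the argument is a one-line substitution via the chain rule.
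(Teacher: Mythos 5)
Your proposal is correct and matches the paper's (much terser) proof, which simply states that the claim "directly follows from the definition of delta dissipativity for static systems" --- the substance in both cases is the chain-rule identity $\dot y = \frac{\partial h(u)}{\partial u}\dot u$, which turns the supply rate into the quadratic form in \eqref{eq:c5-Ps}. Your worry about the storage-function lower bound is moot: Part~I gives a separate, storage-function-free definition of delta dissipativity for static (memoryless) subsystems, requiring only nonnegativity of the supply rate along trajectories, so there is no need to justify $S\equiv 0$ against the dynamic-case definition.
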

	\begin{proof}
		It directly follows from the definition of delta dissipativity for static systems.
	\end{proof}
	
		Proposition \ref{pro:c5-lmi} and \ref{pro:c5-lmi-static} enable a numerical approach to verify delta dissipativity for a general nonlinear system. Given a fixed $(x,u)$, condition \eqref{eq:c5-lmi} becomes a linear matrix equality with $\mathcal{P}$ and $X$ as variables, which can be effectively solved. On the other hand, given fixed $\mathcal{P}$ and $X$, one can calculate the region $\mathcal{D}$ in which delta dissipativity holds. This can be achieved from point-by-point verification of \eqref{eq:c5-lmi} for low-dimension subsystems. Since the left-hand side of \eqref{eq:c5-lmi} is continuously dependent on $(x,u)$, one can also estimate $\mathcal{D}$ by perturbation analysis and leverage tools from robust optimization. 

	To end this subsection, we demonstrate delta dissipativity analytically for some simple devices. Following \cite{simpson2018equilibrium}, we partition $X\in\rr^{2m\times 2m}$ as
	\begin{equation*}
		X=\begin{bmatrix}
			Q&S\\S^\T&R
		\end{bmatrix},
	\end{equation*}
	where $Q\in\rr^{m\times m}$ and $R\in\rr^{m\times m}$ are symmetric.
	
	\begin{proposition}[Intermediate Node and Constant Voltage Source]\label{pro:lianjie}
		The intermediate node in Example \ref{eg:c5-lianjie} and the constant voltage source in Example \ref{eg:c5-constV} are $delta$-D($X,\rr^2$) for any $X$ with $Q\succeq0$.
	\end{proposition}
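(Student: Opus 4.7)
The plan is to apply Proposition \ref{pro:c5-lmi-static} directly and observe that the output maps in both examples are constant, which trivializes the Jacobian and reduces the matrix inequality to the block $Q$ of the supply-rate matrix $X$.

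First I would recall that for the intermediate node, $h_i(u) \equiv (0,0)^\T$, and for the constant voltage source, $h_i(u) \equiv (V_{Di}^0, V_{Qi}^0)^\T$. In either case the output function is independent of $u$, so $\frac{\partial h_i(u)}{\partial u} = 0$ everywhere on $\rr^2$. Since both are static subsystems, Proposition \ref{pro:c5-lmi-static} applies and the global domain $\mathcal{D} = \rr^2$ is immediate once the pointwise condition is shown.

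Next I would substitute the zero Jacobian into the quadratic form from \eqref{eq:c5-Ps}. Using the block partition
\[
X = \begin{bmatrix} Q & S \\ S^\T & R \end{bmatrix},
\]
the inequality reduces to
\[
\begin{bmatrix} I \\ 0 \end{bmatrix}^\T
\begin{bmatrix} Q & S \\ S^\T & R \end{bmatrix}
\begin{bmatrix} I \\ 0 \end{bmatrix} = Q \succeq 0,
\]
which is precisely the hypothesis on $X$. Thus the condition of Proposition \ref{pro:c5-lmi-static} holds for every $u \in \rr^2$, yielding $delta$-D($X,\rr^2$) for both subsystems.

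There is essentially no obstacle in this proof; the result is a direct computational consequence of Proposition \ref{pro:c5-lmi-static} once one notices that constant output maps collapse the relevant quadratic form onto its $(1,1)$ block. The only item worth being careful about is stating clearly that $Q\succeq 0$ (as opposed to $R\succeq 0$ or a definiteness on $X$ itself) is the exact condition that survives, matching the partitioning convention adopted from \cite{simpson2018equilibrium} just above the proposition.
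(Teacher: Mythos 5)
Your proof is correct and follows exactly the paper's own argument: both output maps are constant, so $\frac{\partial h}{\partial u}=0$, the quadratic form in Proposition \ref{pro:c5-lmi-static} collapses to the block $Q$, and the hypothesis $Q\succeq0$ gives the result globally on $\rr^2$. No differences worth noting.
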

	\begin{proof}
		Since $\frac{\partial h}{\partial u}=0$, we have
		\begin{equation*}
			\begin{bmatrix}
				I\\\frac{\partial h}{\partial u}
			\end{bmatrix}^\text{T}
			\begin{bmatrix}
				Q&S\\S^\T&R
			\end{bmatrix}
			\begin{bmatrix}
				I\\\frac{\partial h}{\partial u}
			\end{bmatrix}=
			\begin{bmatrix}
				I\\0
			\end{bmatrix}^\text{T}
			\begin{bmatrix}
				Q&S\\S^\T&R
			\end{bmatrix}
			\begin{bmatrix}
				I\\0
			\end{bmatrix}=Q.
		\end{equation*}
		Invoking Proposition \ref{pro:c5-lmi-static} completes the proof.
	\end{proof}
	
	
	\begin{proposition}[Constant Impedance Load]\label{pro:constZ}
		The constant impedance load is a special case of the ZIP load in Example \ref{eg:c5-zip} with only $Z_p^l$ and $Z_q^l$ being non-zero. It is $delta$-D($X,\rr^2$) for any $X$ with $Q\succeq\begin{bmatrix}
			-Z_p^l&0\\0&-Z_p^l
		\end{bmatrix}$, $S=\frac{1}{2}I$, and $R\succeq0$.
	\end{proposition}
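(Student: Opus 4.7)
The plan is to reduce everything to the static matrix inequality of Proposition \ref{pro:c5-lmi-static} and then exploit the algebraic structure of the constant-impedance output map. First I would specialize \eqref{eq:c5-zip1}--\eqref{eq:c5-zip2} by setting all coefficients except $Z_p^l,Z_q^l$ to zero. Because both $P_i^L$ and $Q_i^L$ are proportional to $V_i^2=V_{Di}^2+V_{Qi}^2$, the denominator in \eqref{eq:c5-zip2} cancels exactly, and the map $u\mapsto h(u)=-\col(I_{Di},I_{Qi})$ collapses to a single linear map
\begin{equation*}
h(u)=Hu,\qquad H=\begin{bmatrix} Z_p^l & Z_q^l \\ -Z_q^l & Z_p^l \end{bmatrix}.
\end{equation*}
Consequently $\partial h/\partial u = H$ is constant on all of $\rr^2$, so global delta dissipativity reduces to a single constant matrix condition.

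Next I would plug $H$ into \eqref{eq:c5-Ps}. Expanding the quadratic form, the left-hand side becomes $Q+SH+H^{\T}S^{\T}+H^{\T}RH$. The key observation is that under $S=\tfrac{1}{2}I$,
\begin{equation*}
SH+H^{\T}S^{\T}=\tfrac{1}{2}(H+H^{\T})=Z_p^l\,I,
\end{equation*}
because the skew part of $H$ (carrying the $Z_q^l$ terms) cancels while the symmetric part is $Z_p^l I$. This is the only non-trivial algebraic step and is the reason the hypothesis on $Q$ only involves $Z_p^l$ despite the map $H$ depending on both $Z_p^l$ and $Z_q^l$.

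Combining the pieces, the condition \eqref{eq:c5-Ps} becomes
\begin{equation*}
Q+Z_p^l\,I+H^{\T}RH\succeq 0.
\end{equation*}
Since $R\succeq 0$ implies $H^{\T}RH\succeq 0$, it suffices that $Q+Z_p^l I\succeq 0$, which is exactly the hypothesis $Q\succeq \mathrm{diag}(-Z_p^l,-Z_p^l)$. Applying Proposition \ref{pro:c5-lmi-static} then yields $delta$-D($X,\rr^2$). There is no real obstacle here beyond recognizing the symmetric/skew decomposition of $H$; once that is in hand, the rest is a direct verification.
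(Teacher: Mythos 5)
Your proposal is correct and follows essentially the same route as the paper: both reduce the claim to the static condition of Proposition \ref{pro:c5-lmi-static}, compute $\partial h/\partial u=\bigl[\begin{smallmatrix} Z_p^l & Z_q^l\\ -Z_q^l & Z_p^l\end{smallmatrix}\bigr]$, use $S=\tfrac{1}{2}I$ so that the skew ($Z_q^l$) part cancels and the cross terms contribute $Z_p^l I$, and drop the $R$-term by positive semidefiniteness. Your write-up merely makes explicit the cancellation of the denominator in \eqref{eq:c5-zip2} and the symmetric/skew decomposition, which the paper leaves implicit.
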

	\begin{proof}
		Note that
		\begin{equation*}
			\begin{bmatrix}
				I\\\frac{\partial h}{\partial u}
			\end{bmatrix}^\text{T}
			\begin{bmatrix}
				Q&S\\S^\T&R
			\end{bmatrix}
			\begin{bmatrix}
				I\\\frac{\partial h}{\partial u}
			\end{bmatrix}
			=Q+S\frac{\partial h}{\partial u}+\frac{\partial h}{\partial u}^\T S^\T+\frac{\partial h}{\partial u}^\T R\frac{\partial h}{\partial u}.
		\end{equation*}
		For the constant impedance load, we have
		\begin{equation*}
			\frac{\partial h}{\partial u}=\begin{bmatrix}
				Z_p^l&Z_q^l\\-Z_q^l&Z_p^l
			\end{bmatrix}.
		\end{equation*}
		Letting $S=\frac{1}{2}I$ and $R\succeq0$, we have
		\begin{equation*}
			\begin{bmatrix}
				I\\\frac{\partial h}{\partial u}
			\end{bmatrix}^\text{T}
			\begin{bmatrix}
				Q&S\\S^\T&R
			\end{bmatrix}
			\begin{bmatrix}
				I\\\frac{\partial h}{\partial u}
			\end{bmatrix}
			\succeq Q+\begin{bmatrix}
				Z_p^l&0\\0&Z_p^l
			\end{bmatrix}\succeq0,
		\end{equation*}
		which completes the proof.
	\end{proof}
	
	\section{Method for Verifying the Coupling Condition}
	This section presents a systematic method to verify the coupling \textbf{Condition 3} for system-wide stability (see Part I). We show that this condition can be transformed into an optimization problem. For large-scale systems, we propose an ADMM-based algorithm to verify the coupling condition in a distributed computational framework.
	\subsection{Verification as Optimization}
	The coupling condition can be formulated as a feasibility problem in optimization. Define a matrix-valued function $l^c$:
	\begin{equation*}
		l^c(X_1,\dots,X_N):=\begin{bmatrix}
			-C\\I
		\end{bmatrix}^\text{T}P_\pi^\text{T}\text{blkdiag}(X_1,\dots,X_{N})P_\pi\begin{bmatrix}
			-C\\I
		\end{bmatrix},
	\end{equation*}
	where $X_i$ is the quadratic supply rate matrix of the $i$-th subsystem, $C$ and $P_\pi$ are defined as in \textbf{Condition 3} (see Part I). The coupling condition holds if there exist $X_1,\dots,X_N$ such that $l^c(X_1,\dots,X_N)\preceq0$. Additionally, each $X_i$ is subjected to local delta dissipativity constraints, expressed as:
	\begin{equation*}
		\mathcal{L}_i:=\left\lbrace X_i|\text{the $i$-th subsystem is $delta$-D($X_i,\mathcal{D}_i$) } \right\rbrace. 
	\end{equation*}
	Combining these requirements, the coupling condition reduces to solving the following feasibility problem:
	\begin{equation}\label{eq:coupling condition}
		\begin{aligned}
			\min_{X_i}\; & 1\\
			\text{s.t.}\;\;\;\;\; & X_i\in\mathcal{L}_i,\;\; \forall i\in\mathcal{V}  \\
			& l^c(X_1,\dots,X_N)\preceq0 \\
		\end{aligned}
	\end{equation}
	
	For small to medium scale systems, \eqref{eq:coupling condition} can be solved centrally. However, centralized computation may become intractable for large-scale systems due to dimensionality issues. To address this, we propose a distributed computational framework based on the ADMM algorithm \cite{boyd2011distributed}, a well-known distributed optimization technique. This method decentralizes the computation of $X_i$ across subsystems while ensuring global convergence, enabling efficient verification of coupling conditions in large networks.

	We duplicate local variables $X_i$ and obtain an equivalent form of \eqref{eq:coupling condition} as follows.
	\begin{equation}\label{eq:general-admm}
		\begin{aligned}
			\min_{X_i}\; & 1\\
			\text{s.t.}\;\;\;\;\; & X_i\in\mathcal{L}_i,\;\; \forall i\in\mathcal{V}  \\
			& l^c(Z_1,\dots,Z_N)\preceq0 \\
			& X_i=Z_i,\; \forall i\in\mathcal{V}  
		\end{aligned}
	\end{equation}
	
	The optimization problem \eqref{eq:general-admm} can be solved by ADMM with convergence guarantee as long as the local constraint sets $\mathcal{L}_i$ is convex \cite{boyd2011distributed}. In the following, we focus on the special case that considers the Krasovskii's storage function for local dissipativity. Note, however, that the proposed algorithm can readily adapt to different local optimization sub-problems as long as $\mathcal{L}_i$ is convex.
	
	Now consider Proposition \ref{pro:c5-lmi} for constructing local constraints. Define a matrix-valued function $l_i^d$ for each dynamic bus
	\begin{equation*}
		l_i^d(X_i,\mathcal{P}_i):=
		\begin{bmatrix}
			\mathcal{P}_iA_i+A_i^\T \mathcal{P}_i+\epsilon I & *\\
			B_i^\T\mathcal{P}_i&0
		\end{bmatrix}-
		*^\text{T}
		X_i
		\begin{bmatrix}
			0&I\\
			C_i&D_i
		\end{bmatrix},
	\end{equation*}
	and $l_i^s$ for each static bus
	\begin{equation*}
		l_i^s(X_i):=
		-
		\begin{bmatrix}
			I\\
			D_i
		\end{bmatrix}^\text{T}
		X_i
		\begin{bmatrix}
			I\\
			D_i
		\end{bmatrix},
	\end{equation*}
	where $*$ stands for the symmetric part, $A_i$, $B_i$, $C_i$, and $D_i$ are state-input-dependent Jacobian matrices as defined in \eqref{eq:c5-lmi} and \eqref{eq:c5-Ps}.
	Then \eqref{eq:general-admm} becomes
	
	\begin{equation}\label{eq:ADMM-ps}
		\begin{aligned}
			\min_{\mathcal{P}_i,X_i,Z_i}\; & 1\\
			\text{s.t.}\;\;\;\;\; & l^d_i(X_i,\mathcal{P}_i)\preceq0,\;\mathcal{P}_i\succ0,\;\; \forall i\in\mathcal{V}_1  \\
			& l^s_i(X_i)\preceq0,\;\; \forall i\in\mathcal{V}_2  \\
			& l^c(Z_1,\dots,Z_N)\preceq0 \\
			& X_i=Z_i,\; \forall i\in\mathcal{V}  
		\end{aligned}
	\end{equation}
	
	To improve the convergence, we introduce a relaxation variable $t_i$ for each local constraint and consider
	\begin{equation}\label{eq:ADMM-ps-relax}
		\begin{aligned}
			\min_{\mathcal{P}_i,X_i,Z_i,t_i}\; & \sum_{i\in\mathcal{V}}-t_i\\
			\text{s.t.}\;\;\;\;\; & l^d_i(X_i,\mathcal{P}_i)\preceq-t_iI,\;\mathcal{P}_i\succ0,\;t_i\leq\bar{t}\;\; \forall i\in\mathcal{V}_1  \\
			& l^s_i(X_i)\preceq-t_iI,\;t_i\leq\bar{t}\;\; \forall i\in\mathcal{V}_2  \\
			& l^c(Z_1,\dots,Z_N)\preceq0 \\
			& X_i=Z_i,\; \forall i\in\mathcal{V}  
		\end{aligned}
	\end{equation}
	
	Here, $\bar{t}>0$ serves as a uniform upper bound for $t_i$, preventing excessive values of $t_i$ while ensuring numerical stability. A practical choice is $\bar{t}=100\epsilon$ where $\epsilon>0$ corresponds to the error tolerance for LMI constraint satisfaction.
	An optimal solution of \eqref{eq:ADMM-ps-relax} with all $t_i>0$ guarantees feasibility in the original problem \eqref{eq:ADMM-ps}. Ideally, the solution achieves $t_i^*=\bar{t}$, balancing constraint relaxation across subsystems. This relaxation significantly enhances convergence, especially when the original problem \eqref{eq:ADMM-ps} is infeasible.
	
		For nonlinear subsystems, \eqref{eq:ADMM-ps-relax} involve infinitely many local constraints since $l_i^d$ and $l_i^s$ depend on $x_i$ and $u_i$. It should be verified at every point in the required dissipative region $\mathcal{D}_i$. To this end, one can transform the local constraints into a robust optimization program and leverage advanced robust optimization techniques\cite{ben2002robust}. One can also adapt a scenario optimization program by sampling points in $\mathcal{D}_i$. After solving \eqref{eq:ADMM-ps-relax} and obtaining feasible $X_i$, one can re-characterize the accurate dissipative region $\mathcal{D}_i$ by the dissipation inequality. We adopt the latter approach in the case studies in Sec \ref{sec:case}.
	\subsection{Distributed Algorithm Based on ADMM}
	Define the local Lagrangian for \eqref{eq:ADMM-ps-relax} as
	\begin{equation*}
		L_\rho^i(X_i,Z_i,Y_i):=-t_i+\text{Tr}(Y_i^\T(X_i-Z_i)) +\frac{\rho}{2}\|X_i-Z_i\|_{F}^2,
	\end{equation*}
	where Tr denotes the trace of a matrix and $\|\cdot\|_{F}$ denotes the Frobenius norm. Then the augmented Lagrangian of \eqref{eq:ADMM-ps-relax} is
	\begin{equation*}
		L_\rho(X,Z,Y):=\sum_{i\in\mathcal{V}}L_\rho^i(X_i,Z_i,Y_i), 
	\end{equation*}
	where with a little abuse of notation we let $X$, $Z$, and $Y$ denote the collection of $X_i$, $Z_i$, and $Y_i$, respectively. Based on the augmented Lagrangian, at the $k$-th iteration, our algorithm performs the following six steps.
	
	\subsubsection{X-update}
	Each dynamic bus $i\in\mathcal{V}_1$ solves the following problem
	\begin{equation}\label{eq:X-update-1}
		\begin{aligned}
			\min_{\mathcal{P}_i,X_i,t_i}\; & L_{\rho^k}^i(X_i,Z_i^k,Y_i^k)\\
			\text{s.t.}\;\;\;\; & l^d_i(X_i,\mathcal{P}_i)\preceq-t_iI,\;\mathcal{P}_i\succ0,\;t_i\leq\bar{t}  
		\end{aligned}
	\end{equation}
	Each static bus $i\in\mathcal{V}_2$ solves the following problem
	\begin{equation}\label{eq:X-update-2}
		\begin{aligned}
			\min_{X_i,t_i}\; & L_{\rho^k}^i(X_i,Z_i^k,Y_i^k)\\
			\text{s.t.}\;\; & l^s_i(X_i)\preceq-t_iI,\;t_i\leq\bar{t}  
		\end{aligned}
	\end{equation}
	The optimal solution yields $X^{k+1}$.
	\subsubsection{p-check}
	If for some error tolerance $\epsilon>0$, we have $t_i>\epsilon$ for all $i\in\mathcal{V}$, then we perform the p-check step by solving the following problem.
	\begin{equation}\label{eq:p-check}
		\begin{aligned}
			\min_{p_i,t_z}\;\; & t_z\\
			\text{s.t.}\;\; & l^c(p_1X_1^{k+1},\dots,p_NX_N^{k+1})\preceq t_zI \\
			& p_1+\cdots+p_N=N,\;\; p_i>0, \forall i\in\mathcal{V}
		\end{aligned}
	\end{equation}
	Here, we use a 1-norm constraint to bound $p_i$. If the optimal $t_z<-\epsilon$, then the p-check passes and justifies the coupling condition and the iteration terminates.
	
	\subsubsection{Z-update}
	If p-check fails or is not performed, we do Z-update. The system coordinator solves the following problem.
	\begin{equation}\label{eq:Z-update}
		\begin{aligned}
			\min_{Z_i}\; & L_{\rho^k}(X^{k+1},Z,Y^k)\\
			\text{s.t.}\;\; & l^c(Z_1,\dots,Z_N)\preceq0 
		\end{aligned}
	\end{equation}
	The optimal solution yields $Z^{k+1}$.
	
	\subsubsection{Y-update}
	We update the dual variable by
	\begin{equation}\label{eq:Y-update}
		Y_i^{k+1}:=Y_i^k+\rho^k(X_i^{k+1}-Z_i^{k+1}),\;\;\forall i\in\mathcal{V}
	\end{equation}
	
	\subsubsection{Residuals update}
	We calculate the primal residual in the $k$-th iteration by
	\begin{equation}\label{eq:r-update}
		r^{k}=\left[X_1^{k+1},\dots,X_N^{k+1}\right]-\left[Z_1^{k+1},\dots,Z_N^{k+1}\right]
	\end{equation}
	And calculate the dual residual in the $k$-th iteration by
	\begin{equation}\label{eq:s-update}
		s^{k}=\rho^k\left(\left[Z_1^{k+1},\dots,Z_N^{k+1}\right]-\left[Z_1^{k},\dots,Z_N^{k}\right]\right)
	\end{equation}
	If $\|r^{k}\|_F<\epsilon_{pri}$ and $\|s^{k}\|_F<\epsilon_{dual}$, then we terminate the iteration. Here, $\epsilon_{pri}>0$ and $\epsilon_{dual}>0$ are the error tolerances for the primal and dual residuals, respectively. Guidelines for choosing these tolerances can be found in \cite{boyd2011distributed}.
	
	\subsubsection{$\rho$-update}
	We use the following adaptive penalty parameter $\rho^k$ for each iteration to improve the convergence. 
	\begin{equation}\label{eq:rho-update}
		\rho^{k+1}=\left\lbrace 
		\begin{aligned}
			\tau^{\text{incr}}\rho^k \;\;\;\; &\text{if} \left\|r^k\right\|_{F}>\mu\left\|s^k\right\|_{F}\\
			\rho^k/\tau^{\text{decr}} \;\;\;\; &\text{if} \left\|s^k\right\|_{F}>\mu\left\|r^k\right\|_{F}\\
			\rho^k \;\;\;\; &\text{otherwise,}
		\end{aligned}\right.
	\end{equation}
	where $\mu>1$, $\tau^{\text{incr}}>1$, and $\tau^{\text{decr}}>1$ are parameters.
	
	The pseudo code of our algorithm is summarized below, together with the flow chart as shown in Fig. \ref{fig:algorithm}.
	\begin{algorithm}[htb]
		\caption{Distributed verification of the coupling condition}
		\label{al:1}
		\KwIn{Error tolerance $\epsilon>0$, $\epsilon_{pri}>0$, and $\epsilon_{dual}>0$. Adaptive penalty parameters $\mu>1$, $\tau^{\text{incr}}>1$, and $\tau^{\text{decr}}>1$. Upper bound $\bar{t}>0$ in the X-update. Maximum iteration number $M$. Initial values $X_i^1$, $Z_i^1$, $Y_i^1$, and $\rho^1$.}
		\KwOut{$X_i^*$ that verifies the local and coupling conditions.}
		\BlankLine
		Initialize $X_i^1$, $Z_i^1$, $Y_i^1$, and $\rho^1$. Set $k=1$;
		
		\For{$k=1$ \textnormal{to} $M$}
		{
			\ForEach{$i\in\mathcal{V}$}{
				$X$-\textit{update}: Solve \eqref{eq:X-update-1} or \eqref{eq:X-update-2} and obtain $X_i^{k+1}$ and $t_i$;
			} 
			\If{$t_i>\epsilon$, $\forall i\in\mathcal{V}$}{
				$p$-\textit{check}: Solve \eqref{eq:p-check} and obtain $t_z$;
				
				\If{$t_z<-\epsilon$}{
					\Return{$X_i^*=X_i^{k+1}$ and $t_i$}
				}
			}
			$Z$-\textit{update}: Solve \eqref{eq:Z-update} and obtain $Z^{k+1}$;
			
			$Y$-\textit{update}: Update the dual variable by \eqref{eq:Y-update};
			
			\textit{Residuals update}: Update the primal and the dual residuals by \eqref{eq:r-update} and \eqref{eq:s-update}, respectively;
			
			\If{$\|r^{k}\|_F<\epsilon_{pri}$ and $\|s^{k}\|_F<\epsilon_{dual}$}{
				\Return{$X_i^*=X_i^{k+1}$ and $t_i$}
			}
			
			$\rho$-\textit{update}: Update the penalty by \eqref{eq:rho-update} and obtain $\rho^{k+1}$.
		}
	\end{algorithm}
	\begin{figure}[htb]
		\centering
		\includegraphics[width=.9\hsize]{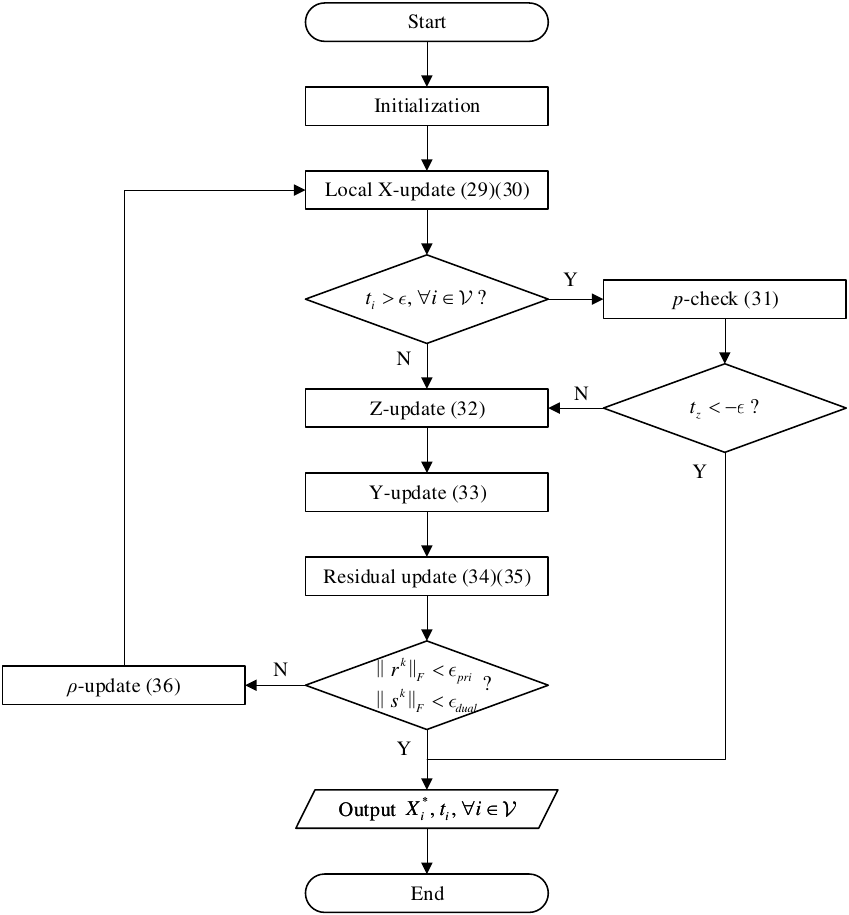}
		\caption{The flow chart of Algorithm \ref{al:1}.}
		\label{fig:algorithm}
	\end{figure}

	\section{Applications}\label{sec:case}
	%
	
	This section demonstrates three practical applications of our stability assessment framework using IEEE benchmark systems. Each case highlights distinct advantages of the method while preserving its core theoretical principles. The first application is stability assessment toward multiple equilibria, relevant for scenarios where multiple equilibrium points exist under fixed system parameters. Our method enables simultaneous stability evaluation across all equilibria. The second application is stability assessment under varying operating conditions, suited for scenarios with fluctuating parameters—such as load changes or variable renewable energy output—which cause shifts in equilibrium. Our approach facilitates rapid stability assessment as operating conditions change. The third application is distributed stability assessment of large-scale systems, which applies to power systems with massive interconnected devices where centralized methods are computationally burdensome or raise privacy concerns. Our method supports distributed stability assessment, requiring only the exchange of $X_i$ between buses and the coordinator, ensuring scalability and privacy. 
	\subsection{Application 1: Stability Assessment for Multiple Equilibria}
	Power systems often exhibit multiple equilibria under fixed parameters due to nonlinearity. Traditional equilibrium-oriented methods, such as eigenvalue analysis, require individual equilibrium evaluations. In contrast, our approach certifies stability for all equilibria within a specified region via \textbf{Theorem 2} (see Part I), enabling simultaneous stability evaluation across multiple equilibria.
	\subsubsection{Case Settings}
	Consider the modified IEEE 9-bus power system as shown in Fig. \ref{fig:c5-9bus}. To demonstrate the compatibility of the proposed method with heterogeneous dynamics, two SGs are replaced with different types of inverter-interfaced supplies, i.e., a VSG and a CD. Bus 5, 7, and 9 are constant-impedance loads while Bus 4, 6, and 8 are intermediate nodes. Subsystems' models can be found in Section \ref{sec:model}, with SG parameters from \cite{anderson2008power} and inverter details in Table \ref{tab:c5-9bus-vsg}.
	\begin{figure}[htb]
		\centering
		\includegraphics[width=.8\hsize]{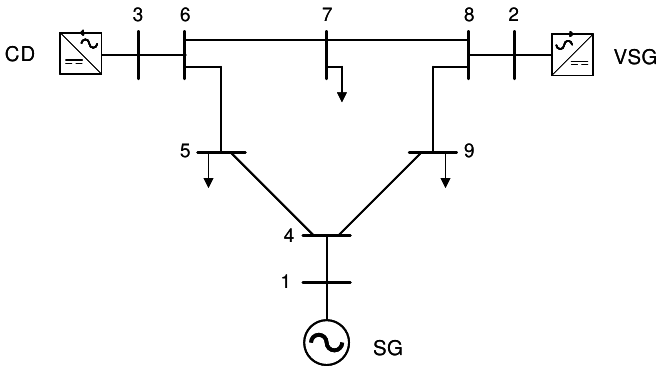}
		\caption{The modified IEEE 9-bus system.}
		\label{fig:c5-9bus}
	\end{figure}
	
	\begin{table}[h]
		\centering
		\footnotesize
		\caption{Parameters of VSG and CD in the IEEE 9-bus system.}
		\begin{tabular}{lll}
			\toprule
			Device &Parameters&Values (p.u.)\\
			\midrule
			VSG& $M$, $D$, $K_Q$, $K_I$, $T$,&20, 0.5, 0.01, 0.5, 5\\ 
			CD& $\tau_{1}$, $\tau_{2}$, $d_{1}$, $d_{2}$,  &0.3,   6,    0.01,   0.01  \\
			\bottomrule
		\end{tabular}
		\label{tab:c5-9bus-vsg}
	\end{table}
	
	
	\subsubsection{Verification of stability conditions}
	Using Krasovskii-type storage functions $S_i(x_i,u_i)=f_i(x_i,u_i)^\T \mathcal{P}_if_i(x_i,u_i)$, we numerically verified delta dissipativity for all dynamic subsystems (SG, VSG, CD) through Proposition \ref{pro:c5-lmi} with the following $\mathcal{P}_i$ and $X_i$.
	\begin{equation*}
		\footnotesize
		\mathcal{P}_1=\begin{bmatrix}
			30.1922 &   0.7047 &  20.7244\\
			0.7047  &  2.7721  & -2.4662\\
			20.7244 &  -2.4662 & 322.2836
		\end{bmatrix},
	\end{equation*}
	\begin{equation*}
		\footnotesize
		X_1=\begin{bmatrix}
			165.4912 & 117.6145 &  23.8340 & -19.6867\\
			117.6145 & 781.3801 & 176.4893 &  18.2848\\
			23.8340 & 176.4893 &  35.8422  &  3.9609\\
			-19.6867 &  18.2848  &  3.9609  & -1.7427
		\end{bmatrix},
	\end{equation*}
	
	\begin{equation*}
		\footnotesize
		\mathcal{P}_2=\begin{bmatrix}
			0.4702  &  1.5685  & -0.9136\\
			1.5685 &  40.8179 &  18.2246\\
			-0.9136 &  18.2246 & 831.2424
		\end{bmatrix},
	\end{equation*}
	\begin{equation*}
		\footnotesize
		X_2=\begin{bmatrix}
			16.2469 &   3.3787 &  -4.1083 &  80.8230\\
			3.3787  &  5.0757 & -10.2979  & 20.6902\\
			-4.1083 & -10.2979 & -40.5540 & -9.3808\\
			80.8230  & 20.6902  & -9.3808 & 201.6209
		\end{bmatrix},
	\end{equation*}
	
	\begin{equation*}
		\footnotesize
		\mathcal{P}_3=\begin{bmatrix}
			999.9999 & -185.7749\\
			-185.7749 & 999.9820\\
		\end{bmatrix},
	\end{equation*}
	\begin{equation*}
		\footnotesize
		X_3=\begin{bmatrix}
			1.1244  &  0.1998 &   6.3924 & -42.3566\\
			0.1998  &  0.1511 &   0.7809 & -11.8073\\
			6.3924 &   0.7809& -231.8572 &  21.4117\\
			-42.3566 & -11.8073 &  21.4117 &-997.3074
		\end{bmatrix}.
	\end{equation*}
	
	The dissipative regions $\mathcal{D}_i$, $i=1,2,3$ spanned 5, 5, and 4 dimensions, respectively. Their 2D cross-sections are visualized in Fig. \ref{fig:9bus-D}. Static subsystems satisfy global delta dissipativity as proved in Propositions \ref{pro:lianjie} and \ref{pro:constZ}.
	\begin{figure}[htb]
		\centering
		\includegraphics[width=1\hsize]{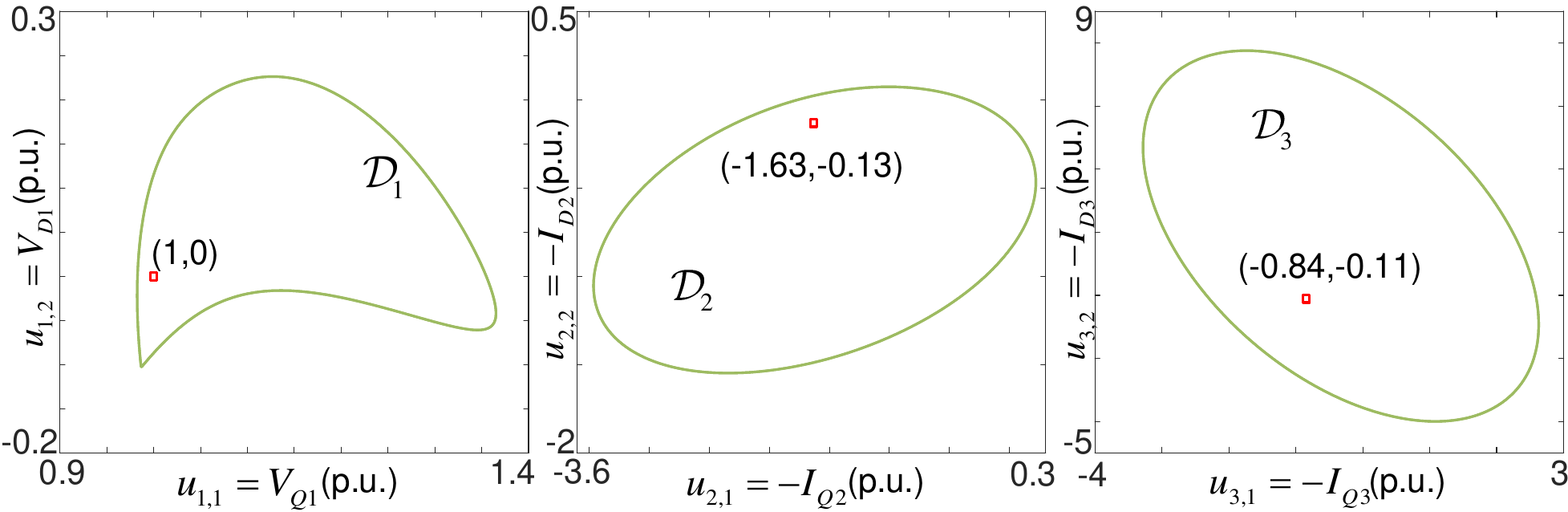}
		\caption{The cross sections of $\mathcal{D}_i$, $i=1,2,3$ on the 2-dimensional input plane. The red dots represent the nominal inputs $u_i^0$ of each dynamic subsystem.}
		\label{fig:9bus-D}
	\end{figure}
	
	
	\subsubsection{Multiple equilibria assessment}
	We identified five isolated equilibria under nominal parameters, as shown in Table \ref{tab:c5-9bus-multieq}.
	\textbf{Theorem 2} (Part I) ensures that any isolated equilibrium in $\mathcal{D}=\mathcal{D}_1\times\mathcal{D}_2\times\mathcal{D}_3\times\rr^{12}$ is asymptotically stable. For any equilibrium $(x^0,u^0)$ of the system, each subsystem can independently verify whether its local state $(x^0_i,u^0_i)$ lies within $\mathcal{D}_i$ and hence certificate the system-wide asymptotic stability.
	
	Equilibrium \#1-\#2 are located within $\mathcal{D}$ while \#3-\#5 do not. Hence, our method can directly assert that equilibrium \#1 and \#2 are asymptotically stable while no conclusion can be made for the other three equilibria.
	Fig. \ref{fig:9bus-eq-switch} displays the time-domain simulation of the equilibrium step change from \#1 to \#2, which shows the asymptotical stability of these two equilibria.
	
	To compare, eigenvalue analysis confirmed stability for equilibrium \#1-\#3 and instability for \#4-\#5. Table \ref{tab:c5-9bus-multieq} summarizes this result. This verifies the correctness of our theory and also demonstrates that our method could be conservative to some extent as it is only a sufficient but not necessary condition for stability.
	
	\begin{table}[h]
		\centering
		\footnotesize
		\caption{Five Equilibria of the IEEE 9-bus system.}
		\begin{tabular}{cccc}
			\toprule
			E.q. &$x^*=(\delta_1^*,\omega_1^*,E'^*_{qi},\omega_2^*,\theta_2^*,V_2^*,\theta_3^*,V_3^*)$&Location& $\max\Re\lambda$\\
			\midrule
			\#1& (0.26, 0, 1.02, 0, 0.17, 1.00, 0.08, 1.00)&$\in\mathcal{D}$& -0.17\\ 
			
			\#2& (-0.10, 0, 0.98, 0, 5.70,  0.99, 0.06, 0.99)  &$\in\mathcal{D}$ & -0.17  \\
			\#3& (0.82, 0,  0.94, 0, -5.18, 0.98, 0.11, 0.99)  &$\notin\mathcal{D}$ & -0.15\\
			\#4& (1.95, 0, 0.64,  0, -4.09,  0.95,  0.13, 0.95)  &$\notin\mathcal{D}$ & 1.62\\
			\#5& (2.43, 0, 0.06, 0, 0.34, 0.99, 0.093, 0.99)  &$\notin\mathcal{D}$ & 1.23\\
			\bottomrule
		\end{tabular}
		\label{tab:c5-9bus-multieq}
	\end{table}
	\begin{figure}[htb]
		\centering
		\includegraphics[width=1\hsize]{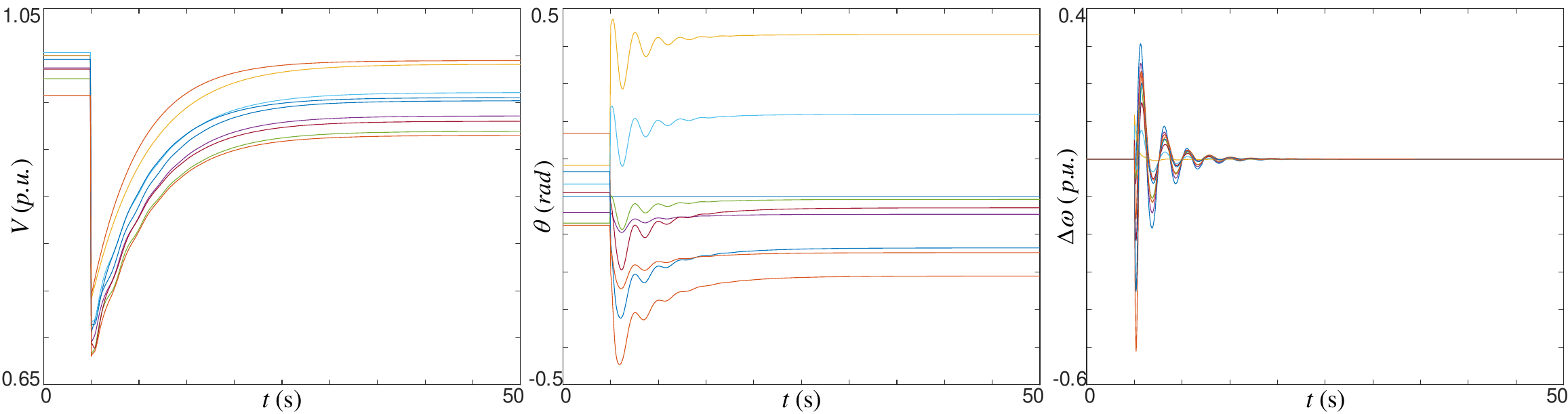}
		\caption{The system equilibrium stably shifts from \#1 to \#2 after a disturbance occurred at $t=10$s.}
		\label{fig:9bus-eq-switch}
	\end{figure}

	\subsection{Application 2: Stability Assessment under Varying Operating Conditions}
	Power systems are subjected to increasing fluctuation from both loads and renewable energy resources. This leads to the variation of the system's parameters and hence shifts in equilibrium. Thanks to the equilibrium-free nature of delta dissipativity, our method enables rapid stability evaluation under shifting equilibria. Our framework pre-computes the dissipativity matrix $X_i$ and verifies the equilibrium-independent coupling condition. As parameters evolve, our method only needs to revalidate subsystems' local delta dissipativity. We demonstrate this application by the IEEE 39-bus benchmarks as follows.
	\subsubsection{Case Settings}
	Consider the modified IEEE 39-bus system integrating 13 heterogeneous dynamic devices (SGs and inverters) and 26 static components, as shown in Fig. \ref{fig:c5-39bus}. The models of all dynamic subsystems can be found in Section \ref{sec:model}. The remaining 12 loads in the system are modeled as constant impedance. Inverter parameters follow Table \ref{tab:c5-39bus-para} and SG parameters can be found in \cite{Pai_EnergyFunctionAnalysis_1989}.
	\begin{figure}[htb]
		\centering
		\includegraphics[width=0.85\hsize]{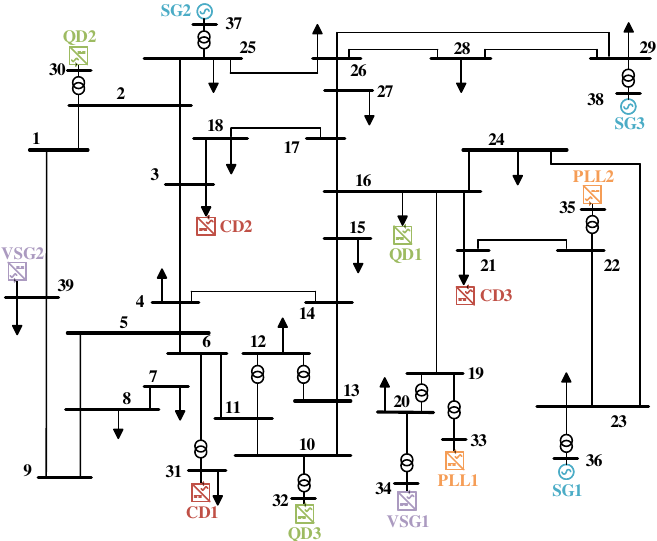}
		\caption{The modified IEEE 39-bus system.}
		\label{fig:c5-39bus}
	\end{figure}
	
	\begin{table}[h]
		\centering
		\footnotesize
		\caption{Parameters of inverters in the IEEE 39-bus system.}
		\begin{tabular}{lll}
			\toprule
			Device & Parameters  &  Values (p.u.)\\ 
			\midrule
			CD1&$\tau_1,\tau_2,d_1,d_2$  & 0.3, 8.1, 0.01, 0.01 \\
			CD2&$\tau_1,\tau_2,d_1,d_2$  & 0.25, 9, 0.01, 0.01 \\
			CD3&$\tau_1,\tau_2,d_1,d_2$  & 0.35, 8.4, 0.01, 0.01 \\
			QD1&$\tau_1,\tau_2,d_1,d_2$  & 0.3, 8, 0.01, 0.01 \\
			QD2&$\tau_1,\tau_2,d_1,d_2$  & 0.2, 7.5, 0.01, 0.01 \\
			QD3&$\tau_1,\tau_2,d_1,d_2$  & 0.4, 6.5, 0.01, 0.01 \\
			PLL1&$K_I,K_P,\tau_1,\tau_2,d_1,d_2$ & 590, 34, 1, 1, 0.5, 0.5\\
			PLL2&$K_I,K_P,\tau_1,\tau_2,d_1,d_2$ & 592, 35, 1.2, 1.2, 0.4, 0.4\\
			VSG1& $M$, $D$, $K_Q$, $K_I$, $T$,&20, 1.5, 0.01, 0.5, 5\\ 
			VSG2& $M$, $D$, $K_Q$, $K_I$, $T$,&50, 2.5, 0.01, 0.8, 7\\ 
			\bottomrule
		\end{tabular}
		\label{tab:c5-39bus-para}
	\end{table}
	
	\subsubsection{Verifying stability conditions}
	Using Proposition \ref{pro:c5-lmi}, we verify that all dynamic subsystems satisfy delta dissipativity with computed $\mathcal{P}_i$, $X_i$, and $\mathcal{D}_i$.
	For example, the dissipative region $\mathcal{D}_i$ of PLL1 and QD1 are 8- and 6-dimensional, respectively. Fig. \ref{fig:39bus-D} shows their 2-dimensional sections in the input space along with the nominal inputs $u_i^0$. 
	All static subsystems are globally dissipative and the coupling condition is also verified by solving the feasibility problem \eqref{eq:coupling condition}. Hence, our \textbf{Theorem 2} (Part I) guarantees that any isolated equilibrium in $\mathcal{D}$ is asymptotically stable. 
	
	%
	%
	%

	\begin{figure}[htb]
		\centering
		\includegraphics[width=1\hsize]{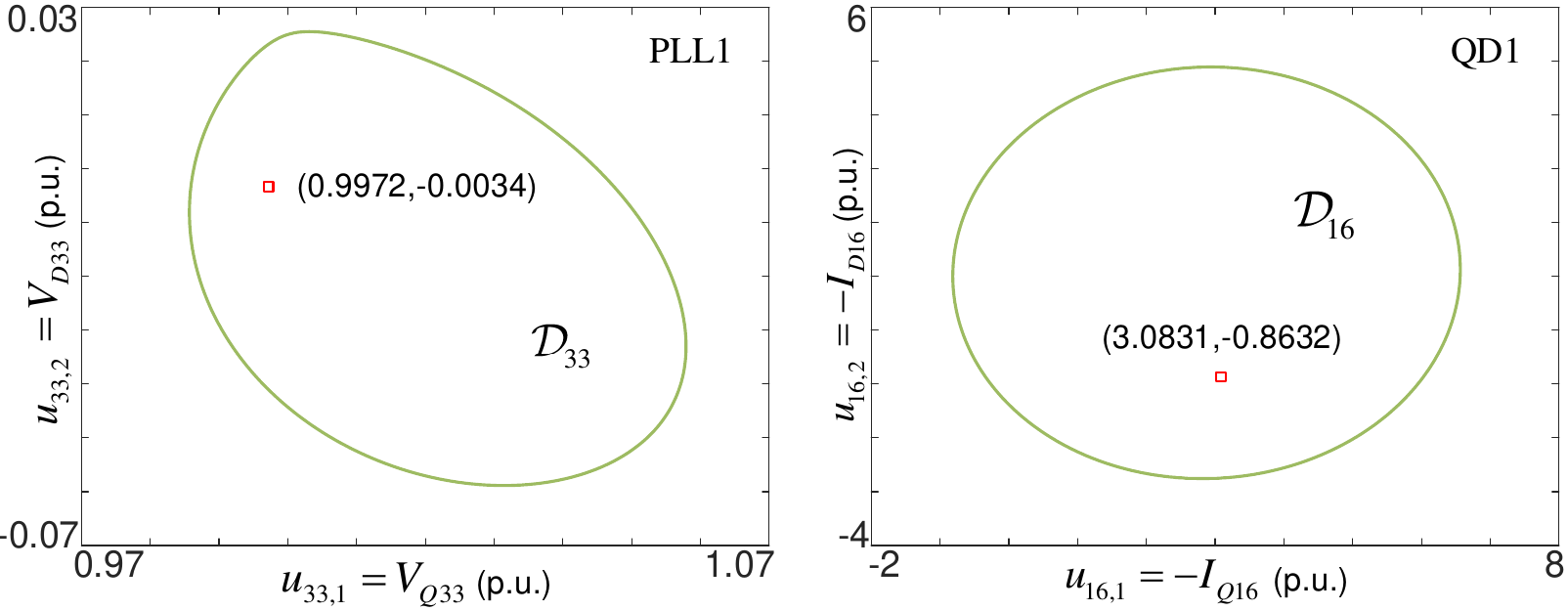}
		\caption{The cross sections of $\mathcal{D}_i$, $i=33,16$ on the 2-dimensional input plane with $x_i=x_i^0$. The red dots represent the nominal inputs $u_i^0$ of each dynamic subsystem.}
		\label{fig:39bus-D}
	\end{figure}
	
	To verify, we consider the equilibrium $(x^0,u^0)$ of the system under the nominal operation condition. Each subsystem locally verifies that $(x^0_i,u^0_i)$ are inside $\mathcal{D}_i$ and hence certificate the system-wide asymptotic stability. The centralized eigenvalue analysis justifies our conclusion as the maximal real part of eigenvalues being $-0.058$. 
	
	\subsubsection{Varying operation condition}
	Consider different operation conditions as the system loads vary up and down with a scaling factor $s>0$. SG3 at bus 38 is assumed to be the balance node, which will adjust its $P^m$ and $E_f$ to maintain the terminal voltage constant. Other dynamic subsystems take fixed parameters for all $s$. Table \ref{tab:c5-39bus-s} reports the range of $s$ that permits local delta dissipativity of each dynamic subsystem. It shows that subsystems locally verify delta dissipativity for $s\in(0.86, 1.16)$ without recomputing $X_i$. Therefore, we can immediately assert the system-wide stability for all the operation conditions in this range. Fig. \ref{fig:39bus-timedomain} shows the time-domain simulation of the system voltages and frequencies as $s$ switches in (0.86, 1.16), which support our result.
	
	\begin{table}[h]
		\centering
		\footnotesize
		\caption{Range of $s$ that permits delta dissipativity of each dynamic subsystem.}
		\begin{tabular}{ll|ll}
			\toprule
			Device & Range of $s$ & Device & Range of $s$\\ 
			\midrule
			SG1 & (0.71, 1.19) & QD2 & (0.50, 1.235) \\
			SG2 & (0.845, 1.16) & QD3 & (0.50, 1.385) \\
			SG3 & (0.86, 1.16) & PLL1 & (0.86, 1.16)\\
			CD1 & (0.635, 1.61) & PLL2 & (0.86, 1.16)\\
			CD2 & (0.86, 1.16) & VSG1 &(0.785, 1.25)\\
			CD3 & (0.83, 2.00) & VSG2 & (0.665, 1.325)\\
			QD1 & (0.80, 1.235)\\
			\bottomrule
		\end{tabular}
		\label{tab:c5-39bus-s}
	\end{table}
	
	\begin{figure}[htb]
		\centering
		\includegraphics[width=1\hsize]{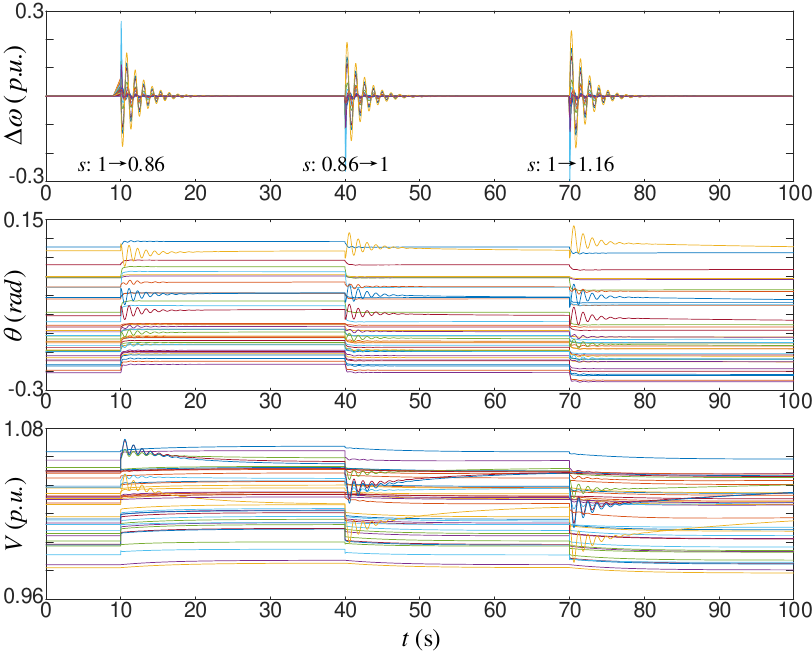}
		\caption{The system load scaling factor $s$ switches from 1 to 0.86 at $t=10$s, from 0.86 to 1 at $t=40$s, and from 1 to 1.16 at $t=70$s.}
		\label{fig:39bus-timedomain}
	\end{figure}
	\subsection{Application 3: Distributed Stability Assessment}
	The final application highlights the compositional property of our method. Traditional method requires centralized modeling and computation to assess the stability of the interconnected power system, which may suffer from computation burden and privacy concerns. Our method, especially Algorithm \ref{al:1}, enables a distributed cloud-edge-like framework to certificate stability of large-scale systems, as shown in Fig.\ref{fig:cloud-edge}. This approach reduces the computational load on the central coordinator while preserving the privacy of individual subsystems, as each subsystem only needs to share its $X_i$ matrix rather than its detailed model.
	Here, we demonstrate this application by the IEEE 118-bus benchmark. Note that the distributed approach can be combined with the previous stability assessment toward equilibria set and varying operating conditions.
	\begin{figure}[htb]
		\centering
		\includegraphics[width=.85\hsize]{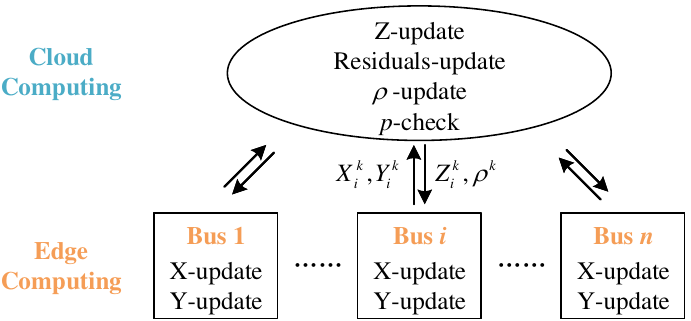}
		\caption{The distributed computing framework for stability assessment.}
		\label{fig:cloud-edge}
	\end{figure}
	\subsubsection{Settings}
	The original IEEE 118-bus benchmark consists of 19 synchronous generators, 35 synchronous condensers, and 91 loads. To illustrate the applicability of our method to heterogeneous devices, we replace half its SGs/condensers with inverters (CD and QD). One quarter of the loads are modeled as constant power loads together with the PLL dynamics, representing the aggregated dynamics of distributed renewable energy sources. The other loads are modeled as constant impedance for simplicity. Fig.\ref{fig:c5-118bus}(A) depicts the modified 118-bus power system. SG parameters follow \cite{athay1979practical}. Inverter parameters align with prior cases, with slight variations introduced to reflect the heterogeneity of the devices.
	\begin{figure*}[htb]
		\centering
		\includegraphics[width=.89\hsize]{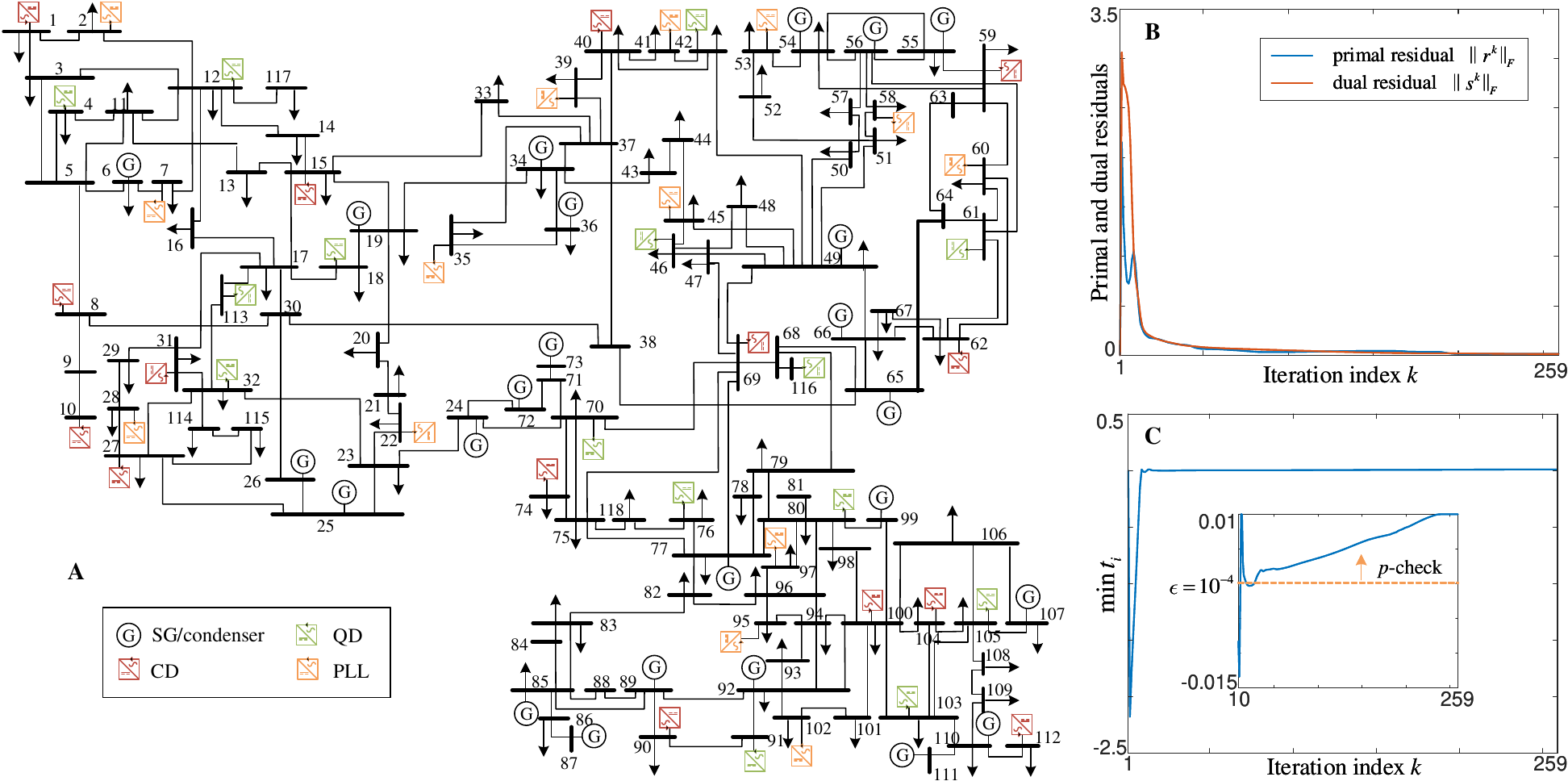}
		\caption{(A) The modified IEEE 118-bus system. (B) The primal and dual residuals in the iteration process. (C) The minimal $t_i$, $i\in\mathcal{V}$ in the iteration process. The \textit{p}-check process is activated when $\min t_i>\epsilon$. The iteration successfully terminates at $k=259$ as the \textit{p}-check process is passed.}
		\label{fig:c5-118bus}
	\end{figure*}
	\subsubsection{Stability assessment}
	We use Algorithm \ref{al:1} to distributedly verify the coupling condition. Here, we set $\epsilon=10^{-4}$, $\epsilon_{pri}=\epsilon_{dual}=0.0042$, $\mu=10$, $\tau^{\text{incr}}=\tau^{\text{decr}}=2$, $\bar{t}=10^{-2}$, $M=1000$, and all initial values as zeros.
	
	Our algorithm successfully verified the coupling condition at iteration $k=259$ as the $p$-check passed. Fig.\ref{fig:c5-118bus}(B) shows the primal and dual residuals in the iteration process. And Fig.\ref{fig:c5-118bus}(C) shows the smallest $t_i$ among all devices in each $X$-update. When $t_i>\epsilon$, the \textit{p}-check process will be activated.
	We remark that without the $p$-check process, our algorithm will also converge at $k=453$ as primal and dual residuals are both less than the tolerances. This shows that the $p$-check process can significantly reduce the required iterations by 42.83\% in this case.
	
	Algorithm \ref{al:1} yields $X_i$ for each device that satisfies local delta dissipativity and the coupling condition. Hence, our theorem ensures that the equilibrium set in the dissipative region $\mathcal{D}=\mathcal{D}_1\times\cdots\times\mathcal{D}_{118}$ is asymptotically stable. The nominal equilibrium was certified stable through local $\mathcal{D}_i$ checks, corroborated by centralized eigenvalue analysis, which shows that the maximal real part of eigenvalues is -0.0579.
	
	We remark the system consists of 188 state variables and 236 algebraic variables. Computing the Jacobian matrix and its eigenvalues takes over 147 seconds (by Matlab R2024a on a PC with Intel Core i5 @2.9GHz), excluding model construction time. Critically, this computation must be repeated for each equilibrium due to its dependence on operating points. In contrast, our method certifies stability for entire equilibria sets within a predefined dissipative region, eliminating redundant recalculations as equilibria shift. The distributed nature of our algorithm also reduces the computational burden on the central coordinator, enabling a rapid and compositional stability assessment. Note, however, that solving for the dissipativity matrices $X_i$ via Algorithm \ref{al:1} incurs an initial cost subjected to the convergence rate of ADMM (1290 seconds in this case, performed in the same PC). Fortunately, the coupling conditions for $X_i$ are independent of the equilibrium, meaning that once obtained, the $X_i$ matrices can be reused for stability analysis across multiple equilibria. 
	
	Furthermore, the distributed architecture minimizes computational burdens on central coordinators and enhances privacy: subsystems share only $X_i$	matrices—not detailed models or parameters—making the method particularly suitable for multi-stakeholder grids with confidentiality requirements.
	
	\section{Concluding Remarks}
	Building on the theoretical framework developed in Part I, this paper focused on methods for applying the compositional and equilibrium-free approach to power systems. We addressed two critical challenges: verifying local delta dissipativity for heterogeneous devices and efficiently certifying coupling conditions in interconnected systems.
	We proposed a method using Krasovskii’s type storage functions to verify local conditions, which is adaptable to a wide range of nonlinear power device models. For coupling conditions, we developed an ADMM-based distributed algorithm, which ensures scalability in large-scale power grids.
	Three key applications of our theory were demonstrated that highlight the advantages resulting from the equilibrium-free and compositional features. Case studies on modified IEEE 9-bus, 39-bus, and 118-bus benchmarks validated our methods, showcasing their effectiveness across systems of varying scales and complexities.
	
	This two-part study offers a comprehensive framework for scalable and modular stability analysis in modern power systems. By eliminating the dependency on single equilibrium and leveraging distributed algorithms, our approach addresses key challenges in analyzing stability for grids with massive heterogeneous devices and constantly varying operating conditions. Future research may explore the integration of delta dissipativity with advanced grid control strategies to enhance power system stability.
	\bibliographystyle{IEEEtran}
	\bibliography{mybib}

\begin{thebibliography}{10}
\providecommand{\url}[1]{#1}
\csname url@samestyle\endcsname
\providecommand{\newblock}{\relax}
\providecommand{\bibinfo}[2]{#2}
\providecommand{\BIBentrySTDinterwordspacing}{\spaceskip=0pt\relax}
\providecommand{\BIBentryALTinterwordstretchfactor}{4}
\providecommand{\BIBentryALTinterwordspacing}{\spaceskip=\fontdimen2\font plus
\BIBentryALTinterwordstretchfactor\fontdimen3\font minus
  \fontdimen4\font\relax}
\providecommand{\BIBforeignlanguage}[2]{{%
\expandafter\ifx\csname l@#1\endcsname\relax
\typeout{** WARNING: IEEEtran.bst: No hyphenation pattern has been}%
\typeout{** loaded for the language `#1'. Using the pattern for}%
\typeout{** the default language instead.}%
\else
\language=\csname l@#1\endcsname
\fi
#2}}
\providecommand{\BIBdecl}{\relax}
\BIBdecl

\bibitem{Stott_Powersystemdynamic_1979}
B.~Stott, ``Power system dynamic response calculations,'' \emph{Proc. IEEE},
  vol.~67, no.~2, pp. 219--241, 1979.

\bibitem{Sastry_Hierarchicalstabilityalert_1980}
S.~Sastry and P.~Varaiya, ``Hierarchical stability and alert state steering
  control of interconnected power systems,'' \emph{IEEE Trans. Circuits Syst.},
  vol.~27, no.~11, pp. 1102--1112, 1980.

\bibitem{Chiang_Directstabilityanalysis_1995}
H.-D. Chang, C.-C. Chu, and G.~Cauley, ``Direct stability analysis of electric
  power systems using energy functions: theory, applications, and
  perspective,'' \emph{Proc. IEEE}, vol.~83, no.~11, pp. 1497--1529, 1995.

\bibitem{kottenstette2014relationships}
N.~Kottenstette, M.~J. McCourt, M.~Xia, V.~Gupta, and P.~J. Antsaklis, ``On
  relationships among passivity, positive realness, and dissipativity in linear
  systems,'' \emph{Automatica}, vol.~50, no.~4, pp. 1003--1016, 2014.

\bibitem{gusev2006kalman}
S.~V. Gusev and A.~L. Likhtarnikov, ``Kalman-popov-yakubovich lemma and the
  s-procedure: A historical essay,'' \emph{Autom. Remote Control}, vol.~67, pp.
  1768--1810, 2006.

\bibitem{Fiaz_portHamiltonianapproachpower_2013a}
S.~Fiaz, D.~Zonetti, R.~Ortega, J.~M. Scherpen, and A.~Van~der Schaft, ``A
  port-hamiltonian approach to power network modeling and analysis,''
  \emph{Eur. J. Control}, vol.~19, no.~6, pp. 477--485, 2013.

\bibitem{Stegink_unifyingenergybasedapproach_2016}
T.~Stegink, C.~De~Persis, and A.~van~der Schaft, ``A unifying energy-based
  approach to stability of power grids with market dynamics,'' \emph{IEEE
  Trans. Autom. Control}, vol.~62, no.~6, pp. 2612--2622, 2017.

\bibitem{hill1976stability}
D.~Hill and P.~Moylan, ``The stability of nonlinear dissipative systems,''
  \emph{IEEE Trans. Autom. Control}, vol.~21, no.~5, pp. 708--711, 1976.

\bibitem{simpson2018equilibrium}
J.~W. Simpson-Porco, ``Equilibrium-independent dissipativity with quadratic
  supply rates,'' \emph{IEEE Trans. Autom. Control}, vol.~64, no.~4, pp.
  1440--1455, 2018.

\bibitem{Song_DistributedFrameworkStability_2017}
Y.~Song, D.~J. Hill, T.~Liu, and Y.~Zheng, ``A distributed framework for
  stability evaluation and enhancement of inverter-based microgrids,''
  \emph{IEEE Trans. Smart Grid}, vol.~8, no.~6, pp. 3020--3034, 2017.

\bibitem{8890862}
P.~{Yang}, F.~{Liu}, Z.~{Wang}, and C.~{Shen}, ``Distributed stability
  conditions for power systems with heterogeneous nonlinear bus dynamics,''
  \emph{IEEE Trans. Power Syst.}, vol.~35, no.~3, pp. 2313--2324, 2020.

\bibitem{boyd2011distributed}
S.~Boyd, N.~Parikh, E.~Chu, B.~Peleato, J.~Eckstein \emph{et~al.},
  ``Distributed optimization and statistical learning via the alternating
  direction method of multipliers,'' \emph{Found. Trends Mach. Learn.}, vol.~3,
  no.~1, pp. 1--122, 2011.

\bibitem{molzahn2017survey}
D.~K. Molzahn, F.~D{\"o}rfler, H.~Sandberg, S.~H. Low, S.~Chakrabarti,
  R.~Baldick, and J.~Lavaei, ``A survey of distributed optimization and control
  algorithms for electric power systems,'' \emph{IEEE Trans. Smart Grid},
  vol.~8, no.~6, pp. 2941--2962, 2017.

\bibitem{topcu2009compositional}
U.~Topcu, A.~K. Packard, and R.~M. Murray, ``Compositional stability analysis
  based on dual decomposition,'' in \emph{Proceedings of the 48h IEEE
  Conference on Decision and Control (CDC) held jointly with 2009 28th Chinese
  Control Conference}.\hskip 1em plus 0.5em minus 0.4em\relax IEEE, 2009, pp.
  1175--1180.

\bibitem{meissen2015compositional}
C.~Meissen, L.~Lessard, M.~Arcak, and A.~K. Packard, ``Compositional
  performance certification of interconnected systems using admm,''
  \emph{Automatica}, vol.~61, pp. 55--63, 2015.

\bibitem{1085625}
N.~{Tsolas}, A.~{Arapostathis}, and P.~{Varaiya}, ``A structure preserving
  energy function for power system transient stability analysis,'' \emph{IEEE
  Trans. Circuits Syst.}, vol.~32, no.~10, pp. 1041--1049, 1985.

\bibitem{huang2022impacts}
L.~Huang, H.~Xin, W.~Dong, and F.~D{\"o}rfler, ``Impacts of grid structure on
  pll-synchronization stability of converter-integrated power systems,''
  \emph{IFAC-PapersOnLine}, vol.~55, no.~13, pp. 264--269, 2022.

\bibitem{bevrani2014virtual}
H.~Bevrani, T.~Ise, and Y.~Miura, ``Virtual synchronous generators: a survey
  and new perspectives,'' \emph{Int. J. Electr. Power Energy Syst.}, vol.~54,
  pp. 244--254, 2014.

\bibitem{Zhang_OnlineDynamicSecurity_2015}
Y.~Zhang and L.~Xie, ``Online dynamic security assessment of microgrid
  interconnections in smart distribution systems,'' \emph{IEEE Trans. Power
  Syst.}, vol.~30, no.~6, pp. 3246--3254, 2015.

\bibitem{Simpson-Porco_VoltageStabilizationMicrogrids_2017}
J.~W. Simpson-Porco, F.~D{\"o}rfler, and F.~Bullo, ``Voltage stabilization in
  microgrids via quadratic droop control,'' \emph{IEEE Trans. Autom. Control},
  vol.~62, no.~3, pp. 1239--1253, 2017.

\bibitem{Kundur_Powersystemstability_1994}
P.~Kundur, N.~J. Balu, and M.~G. Lauby, \emph{Power system stability and
  control}.\hskip 1em plus 0.5em minus 0.4em\relax McGraw-hill New York, 1994,
  vol.~7.

\bibitem{ben2002robust}
A.~Ben-Tal and A.~Nemirovski, ``Robust optimization--methodology and
  applications,'' \emph{Math. Program.}, vol.~92, pp. 453--480, 2002.

\bibitem{anderson2008power}
P.~Anderson and A.~Fouad, \emph{POWER SYSTEM CONTROL AND STABILITY},
  2nd~ed.\hskip 1em plus 0.5em minus 0.4em\relax Hoboken, NJ: John Wiley \&
  Sons, 2008.

\bibitem{Pai_EnergyFunctionAnalysis_1989}
A.~Pai, \emph{Energy Function Analysis for Power System Stability}.\hskip 1em
  plus 0.5em minus 0.4em\relax Boston: {Springer}, 1989, 02102.

\bibitem{athay1979practical}
T.~Athay, R.~Podmore, and S.~Virmani, ``A practical method for the direct
  analysis of transient stability,'' \emph{IEEE Trans. Power Apparatus and
  Systems}, no.~2, pp. 573--584, 1979.

\end{thebibliography}

\end{document}